\newcommand{\mult}[1]{\mathcal M_{#1}}
\newcommand{\hist}[2]{\xi_{#1}({#2})}
\newcommand{\View}[1]{\mathcal V_{#1}}
\newcommand{\mypar}[1]{\noindent\textbf{#1}}
\newcommand{\myskip}{\medskip}
\newtheorem{theorem}{Theorem}[section]
\newtheorem{lemma}[theorem]{Lemma}
\newtheorem{corollary}[theorem]{Corollary}
\newtheorem{observation}[theorem]{Observation}
\lstdefinestyle{mystyle}{
    backgroundcolor=\color{Azure1},   
    basicstyle=\linespread{1.005}\ttfamily\small,
    breakatwhitespace=false,         
    breaklines=true,                 
    captionpos=b,
    frame=single,            
    keepspaces=true,                 
    numbers=left,
    numberstyle=\small,                
	numbersep=5pt,                  
    showspaces=false,                
    showstringspaces=false,
    showtabs=false,                  
    tabsize=2
}
\title{Computing in Anonymous Dynamic Networks Is Linear}
\author{
Giuseppe A. Di Luna\thanks{
DIAG, Sapienza University of Rome, \texttt{g.a.diluna@gmail.com}}
\and
Giovanni Viglietta\thanks{
University of Aizu, \texttt{viglietta@gmail.com}}
}
\date{}
\begin{document}

\maketitle

\setlength\epigraphwidth{.3\textwidth}
\epigraph{There is no permanent place in the world for ugly mathematics.}{\textit{G.\,H.~Hardy, 1940}}

\begin{abstract}
We give the first linear-time counting algorithm for processes in anonymous 1-interval-connected dynamic networks with a leader. As a byproduct, we are able to compute in $3n$ rounds \emph{every} function that is deterministically computable in such networks. If explicit termination is not required, the running time improves to $2n$ rounds, which we show to be optimal up to a small additive constant (this is also the first non-trivial lower bound for counting). As our main tool of investigation, we introduce a combinatorial structure called \emph{history tree}, which is of independent interest. This makes our paper completely self-contained, our proofs elegant and transparent, and our algorithms straightforward to implement.

In recent years, considerable effort has been devoted to the design and analysis of counting algorithms for anonymous 1-interval-connected networks with a leader. A series of increasingly sophisticated works, mostly based on classical mass-distribution techniques, have recently led to a celebrated counting algorithm in $O({n^{4+ \epsilon}} \log^{3} (n))$ rounds (for $\epsilon>0$), which was the state of the art prior to this paper. Our contribution not only opens a promising line of research on applications of history trees, but also demonstrates that computation in anonymous dynamic networks is practically feasible and far less demanding than previously conjectured.
\end{abstract}

\clearpage

\tableofcontents

\clearpage

\section{Introduction}\label{xs:1}

\mypar{Background.} The study of theoretical and practical aspects of highly dynamic distributed systems has received a great deal of attention~\cite{CFQS12,KO,MS18}. These models involve a constantly changing network of computational devices called \emph{processes} (sometimes referred to as ``processors'' or ``agents''). This dynamism is typical of modern real-world systems and is the result of technological innovations, such as the spread of mobile devices, software-defined networks, wirelesses sensor networks, wearable devices, smartphones, etc.

There are several models of dynamism~\cite{CFQS12}; a popular choice is the \emph{1-interval-connected} network model~\cite{KLO10, DW05}. Here, a fixed set of $n$ processes communicate through links forming a time-varying graph, i.e., a graph whose edge set changes at discrete time units called \emph{rounds} (thus, the system is synchronous); such a graph changes unpredictably, but is assumed to be connected at all times.

A large number of research papers have considered dynamic systems where each process has a distinct identity (\emph{unique IDs})~\cite{KLO10}. In this setting, there are efficient algorithms for consensus~\cite{KOM11}, broadcast~\cite{CFMS15}, counting~\cite{KLO10, DW05}, and many other problems~\cite{KLO11,MS18}.

The study of dynamic 1-interval-connected networks with unique IDs began with a seminal paper by Kuhn et al.~\cite{KLO10}, who showed that knowing the size of the system (i.e., the number $n$ of processes) is useful for non-trivial computations. For example, to find out if there is at least one process with input $x$ in the system, every process can start broadcasting its input and wait $n-1$ rounds to see if it receives $x$. Note that this technique is ineffective if nothing is known about $n$.

\smallskip
\mypar{Counting problem.}  There are real-world scenarios in which the individual processes may be unaware of the size of the system; an example are large-scale ad-hoc sensor networks~\cite{DW05}. In such scenarios, the problem of determining $n$ is called \emph{Counting problem}. In~\cite{KLO10} it is shown how to solve the Counting problem in at most $n+1$ rounds in a 1-interval-connected network with unique IDs.

\smallskip
\mypar{Anonymous systems.} The known landscape changes if we consider networks without IDs, which are called \emph{anonymous systems}. In this model, all processes have identical initial states, and may only differ by their inputs. In the last thirty years, a large body of works~\cite{BV01,CDS06,CGM08,JMM12,FPP00,SUW15,YK88} have investigated the computational power of anonymous static networks, giving characterizations of what can be computed in various settings~\cite{BV01,BV02}. Studying anonymous systems is not only important from a theoretical perspective, but also for their practical relevance. In a highly dynamic system, IDs may not be guaranteed to be unique due to operational limitations~\cite{DW05}, or may compromise user privacy. Indeed, users may not be willing to be tracked or to disclose information about their behavior; examples are COVID-19 tracking apps~\cite{SM20}, where a threat to privacy was felt by a large share of the public even if these apps were assigning a rotating random ID to each user. In fact, an adversary can easily track the continuous broadcast of a fixed random ID tracing the movements of a person~\cite{LAOOO20}. Anonymity is also found in insect colonies and other biological systems~\cite{GMTN15}.

\smallskip
\mypar{Unique leader.} In order to deterministically solve non-trivial problems in anonymous systems, it is necessary to have some form of initial ``asymmetry''~\cite{A80,BV02,MCS13,YK88}. The most common assumption is the existence of a \emph{leader}, i.e., a single special process that starts in a different and unique initial state. The presence of a leader is a realistic assumption: examples include a base station in a mobile network, a gateway in a sensor network, etc. For these reasons, the computational power of anonymous systems enriched with a leader has been extensively studied in the classical model of static networks~\cite{FPP00,Sa99,YK96}, as well as in population protocols~\cite{AAE08,ABBS16,BBCD15,BBK11,DFIISV19}.

\smallskip
\mypar{State of the art.} A long series of papers have studied the Counting problem in anonymous 1-interval-connected networks with a leader~\cite{CMM16,DB15,DB16,DBBC14a,DBBC14b,DBCB13,KM18,KM19,KM22,MCS13}. These works have shown better and better upper bounds, leading to the recent paper~\cite{KM22}, which solves the Counting problem in $O({n^{4+ \epsilon}} \log^{3} (n))$ rounds (for $\epsilon>0$). Almost all of these works share the same basic approach of implementing a mass-distribution mechanism similar to the local averaging used to solve the average consensus problem~\cite{C11,OT09,T84}. (In \cref{as:4} we give a comprehensive survey.) We point out that the mass-distribution approach requires processes to exchange numbers whose representation size grows at least linearly with the number of rounds. Thus, all previous works on counting in anonymous 1-interval-connected networks need messages of size at least $\Omega(n^4)$.

In spite of the technical sophistication of this line of research, there is still a striking gap in terms of running time---a multiplicative factor of $O(n^3 \log ^{3}(n))$---between the best algorithm for anonymous networks and the best algorithm for networks with unique IDs. The same gap exists with respect to \emph{static} anonymous networks, where the Counting problem is known to be solvable in $2n$ rounds~\cite{MCS13}. Given the current state of the art, solving non-trivial problems in large-scale dynamic networks is still impractical.

\subsection{Our Contributions}
\mypar{Main results.} In this paper, we close the aforementioned gaps by showing that counting in 1-interval-connected anonymous dynamic networks with a leader is linear: we give a deterministic algorithm for the Counting problem that terminates in $3n-2$ rounds (\cref{xth:term}), as well as a non-terminating algorithm that stabilizes on the correct count in $2n-2$ rounds (\cref{xth:stab}).

We also prove a lower bound of roughly $2n$ rounds, both for terminating and stabilizing counting algorithms (\cref{xth:lower}); this is the first non-trivial lower bound for anonymous networks (better than $n-1$), and it shows that our stabilizing algorithm is optimal up to a small additive constant.

In addition, our algorithms support network topologies with multiple parallel links and self-loops (i.e., \emph{multigraphs}, as opposed to the simple graphs used in traditional models).

\myskip
\mypar{Significance.} Our algorithms actually solve a generalized version of the Counting problem: when processes are assigned inputs, they are able to count how many processes have each input. Solving such a \emph{Generalized Counting problem} allows us to solve a much larger class of problems, called \emph{multi-aggregation problems}, in the same number of rounds (\cref{xth:compl}). On the other hand, these are the only problems that can be solved deterministically in anonymous networks (\cref{xth:imp}).

Thus, we come to an interesting conclusion: In 1-interval-connected anonymous dynamic networks with a leader, \ul{any problem that can be solved deterministically has a solution in at most $3n-2$ rounds}. Our lower bounds show that there is an overhead to be paid for counting in anonymous dynamic networks compared to networks with IDs, but the overhead is only linear. This is much less than what was previously conjectured~\cite{KM20,KM22}; in fact, our results make computations in anonymous and dynamic large-scale networks possible and efficient in practice.

We remark that the local computation time and the amount of processes' internal memory required by our algorithms is only polynomial in the size of the network. Also, like in previous works, processes need to send messages of polynomial size.

\myskip
\mypar{Technique.} Our algorithms and lower bounds are based on a novel combinatorial structure called \emph{history tree}, which completely represents an anonymous dynamic network and naturally models the idea that processes can be distinguished if and only if they have different ``histories'' (\cref{xs:3}). Thanks to the simplicity of our technique, this paper is entirely self-contained, our proofs are transparent and easy to understand, and our algorithms are elegant and straightforward to implement.\footnote{An implementation can be found here: \url{https://github.com/viglietta/Dynamic-Networks}. The repository includes a dynamic network simulator that can be used to run tests and visualize the history trees of custom networks.} 

We argue that history trees are of independent interest, as they could help in the study of different network models, as well as networks with specific restricted dynamics.

\section{Definitions and Preliminaries}\label{xs:2}
Here we give some informal definitions; the interested reader may find rigorous ones in \cref{as:1}.

\myskip
\mypar{Dynamic network.} Our model of computation involves a system of $n$ \emph{anonymous} processes in a \emph{1-interval-connected} dynamic network whose topology changes unpredictably at discrete time units called \emph{rounds}. That is, at round~$t\geq 1$, the network's topology is modeled by a connected undirected multigraph $G_t$ whose edges are called \emph{links}. Processes can send \emph{messages} through links and can update their \emph{internal states} based on the messages they receive.

\myskip
\mypar{Round structure.} Each round is subdivided into two phases: in the \emph{message-passing phase}, each process broadcasts a message containing (an encoding of) its internal state through all the links incident to it. After all messages have been exchanged, the \emph{local-computation phase} occurs, where each process updates its own internal state based on a function $\mathcal A$ of its current state and the multiset of messages it has just received. Processes are \emph{anonymous}, implying that the function $\mathcal A$ is the same for all of them. We stress that all local computations are deterministic.

\myskip
\mypar{Input and output.} At round~$0$, each process is assigned an \emph{input}, which is deterministically converted into the process' initial internal state. Furthermore, at every round, each process produces an \emph{output}, which is a function of its internal state. Some states are \emph{terminal}; once a process enters a terminal state, it can no longer change it.

\myskip
\mypar{Stabilization and termination.} A system is said to \emph{stabilize} if the outputs of all its processes remain constant from a certain round onward; note that a process' internal state may still change even when its output is constant. If, in addition, all processes reach a terminal state, the system is said to \emph{terminate}.

\myskip
\mypar{Problems.} A \emph{problem} defines a relationship between processes' inputs and outputs: an algorithm $\mathcal A$ \emph{solves} a given problem if, whenever the processes are assigned a certain multiset of $n$ inputs, and all processes execute $\mathcal A$ in their local computations, the system eventually stabilizes on the correct multiset of $n$ outputs. Note that the same algorithm $\mathcal A$ must work for all $n\geq 1$ (i.e., the system is unaware of its own size) and regardless of the network's topology (as long as it is 1-interval-connected). A stronger notion of solvability requires that the system not only stabilizes but actually terminates on the correct multiset of outputs.

\myskip
\mypar{Unique leader.} A commonly made assumption is the presence of a \emph{unique leader} in the system. This can be modeled by assuming that each process' input includes a \emph{leader flag}, and an input assignment is valid if and only if there is exactly one process whose input has the leader flag set.

\myskip
\mypar{Multi-aggregation problems.} A \emph{multi-aggregation problem} is a problem where the output to be computed by each process only depends on the process' own input and the multiset of all processes' inputs. In the special case where all processes have to compute the same output, we have an \emph{aggregation problem}. Notable examples of aggregation problems include computing statistical functions on input numbers, such as sum, average, maximum, median, mode, variance, etc.

As we will see, the multi-aggregation problems are precisely the problems that can be solved in 1-interval-connected anonymous dynamic networks with a unique leader.
Specifically, in \cref{xs:4} we will show that all multi-aggregation problems can be solved in a linear number of rounds. On the other hand, in \cref{xs:5} we will prove that no other problem can be solved at all.

\myskip
\mypar{Counting problem.} An important aggregation problem is the \emph{Generalized Counting problem}, where each process must output the multiset of all processes' inputs. That is, the system has to count how many processes have each input. In the special case where all (non-leader) processes have the same input, this reduces to the \emph{Counting problem}: determining the size of the system, $n$.

\myskip
\mypar{Completeness.} The Generalized Counting problem is \emph{complete} for the class of multi-aggregate problems (no matter if the network is connected or has a unique leader), in the following sense:

\begin{theorem}\label{xth:compl}
If the Generalized Counting problem can be solved (with termination) in $f(n)$ rounds, then every multi-aggregate problem can be solved (with termination) in $f(n)$ rounds, too.
\end{theorem}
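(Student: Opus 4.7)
The plan is to give a direct black-box reduction: assume we have an algorithm $\mathcal{A}$ that solves the Generalized Counting problem with termination in $f(n)$ rounds, and from it construct an algorithm $\mathcal{A}'$ that solves a given multi-aggregation problem $P$ in the same number of rounds. Let $g$ be the output function of $P$, so that each process with input $x$ in a system whose multiset of inputs is $M$ must eventually output $g(x, M)$. Since $g$ is a fixed mapping known to all processes (as it is part of the problem specification, and processes are anonymous only in their initial state, not in their code), a process that has computed $M$ and still remembers $x$ can produce its output by a single local computation.

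The construction of $\mathcal{A}'$ is then straightforward. Each process augments its internal state with one extra field in which it stores its original input $x$; this field is initialized at round~$0$ and is never modified for the remainder of the execution. Apart from that, the process runs $\mathcal{A}$ exactly on the same input $x$. Messages exchanged during the execution of $\mathcal{A}'$ are just the messages of $\mathcal{A}$ (the stored copy of $x$ is used only locally and need not be transmitted). After $f(n)$ rounds, by hypothesis every process has reached a terminal state of $\mathcal{A}$ carrying the multiset $M$ of all processes' inputs. At that moment each process performs one additional local step, computing $g(x, M)$ from the stored input $x$ and the multiset $M$, writing the result as its output, and entering a terminal state of $\mathcal{A}'$.

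To verify correctness, observe first that storing and preserving $x$ is a purely local operation that does not alter the execution of $\mathcal{A}$, so the guarantee that $\mathcal{A}$ produces $M$ within $f(n)$ rounds is inherited by $\mathcal{A}'$. Second, by definition of a multi-aggregation problem, $g(x, M)$ is exactly the output required by $P$ from a process with input $x$, so the final outputs are correct. Third, the final local step does not count as a round (rounds are defined by the message-passing/local-computation cycle, and by convention the local computation attached to round~$f(n)$ can absorb this step), so the total running time remains $f(n)$.

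There is essentially no obstacle here: the only thing one needs to check carefully is that $\mathcal{A}$ is assumed to work with arbitrary input assignments (as stated in the model section), so in particular it works when processes are fed the inputs of the multi-aggregation problem $P$; and that requiring processes to retain their original input in internal memory is compatible with the model, which it is, since the internal state is unconstrained apart from being finitely describable. The same argument, with ``terminates'' replaced by ``stabilizes'', would yield the analogous stabilizing version of the theorem.
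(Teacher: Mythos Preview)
Your proof is correct and follows essentially the same approach as the paper: run the Generalized Counting algorithm while retaining the original input $x$, and upon obtaining the multiset $\mu$ of all inputs, compute the signature function of the multi-aggregation problem on $(x,\mu)$ within the same local-computation phase. The paper's main-text proof is a one-line version of this; its appendix (\cref{th:compl}) spells out the state-augmentation construction almost exactly as you do.
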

\begin{proof}
Once a process with input $x$ has determined the multiset $\mu$ of all processes' inputs, it can immediately compute any function of $(x, \mu)$ within the same local-computation phase.
\end{proof}

\section{History Trees}\label{xs:3}
We introduce \emph{history trees} as a natural tool of investigation for anonymous dynamic networks. An example of a history tree is illustrated in \cref{xfig:hg}, and a formal definition is found in \cref{as:2}.

\myskip
\mypar{Indistinguishable processes.} Since processes are anonymous, they can only be distinguished by their inputs or by the multisets of messages they have received. This leads to an inductive definition of \emph{indistinguishability}: Two processes are indistinguishable at the end of round~$0$ if and only if they have the same input. At the end of round~$t\geq 1$, two processes $p$ and $q$ are indistinguishable if and only if they were indistinguishable at the end of round~$t-1$ and, for every equivalence class $A$ of processes that were indistinguishable at the end of round~$t-1$, both $p$ and $q$ receive an equal number of (identical) messages from processes in $A$ at round~$t$.

\myskip
\mypar{Levels of a history tree.} A \emph{history tree} is a structure associated with a dynamic network. It is an infinite graph whose nodes are subdivided into \emph{levels} $L_{-1}$, $L_0$, $L_1$, $L_2$, \dots, where each node in level $L_t$, with $t\geq 0$, represents an equivalence class of processes that are indistinguishable at the end of round~$t$. The level $L_{-1}$ contains a unique node $r$, representing all processes in the system. Each node in level $L_0$ has a label indicating the (unique) input of the processes it represents.

\myskip
\mypar{Black and red edges.} A history tree has two types of undirected edges; each edge connects nodes in consecutive levels. The \emph{black edges} induce an infinite tree rooted at $r$ and spanning all nodes. A black edge $\{v, v'\}$, with $v\in L_{t}$ and $v'\in L_{t+1}$, indicates that the \emph{child node} $v'$ represents a subset of the processes represented by the \emph{parent node} $v$.

The \emph{red multiedges} represent messages. A red edge $\{v, v'\}$ with multiplicity $m$, with $v\in L_{t}$ and $v'\in L_{t+1}$, indicates that, at round~$t+1$, each process represented by $v'$ receives a total of $m$ (identical) messages from processes represented by $v$.

\myskip
\mypar{Anonymity of a node.} The \emph{anonymity} $a(v)$ of a node $v$ of a history tree is defined as the number of processes represented by $v$. Since the nodes in a same level represent a partition of all the processes, the sum of their anonymities must be $n$. Moreover, by the definition of black edges, the anonymity of a node is equal to the sum of the anonymities of its children.

Observe that the Generalized Counting problem can be rephrased as the problem of determining the anonymities of all the nodes in $L_0$.

\begin{figure}
\centering
\includegraphics[width=\linewidth]{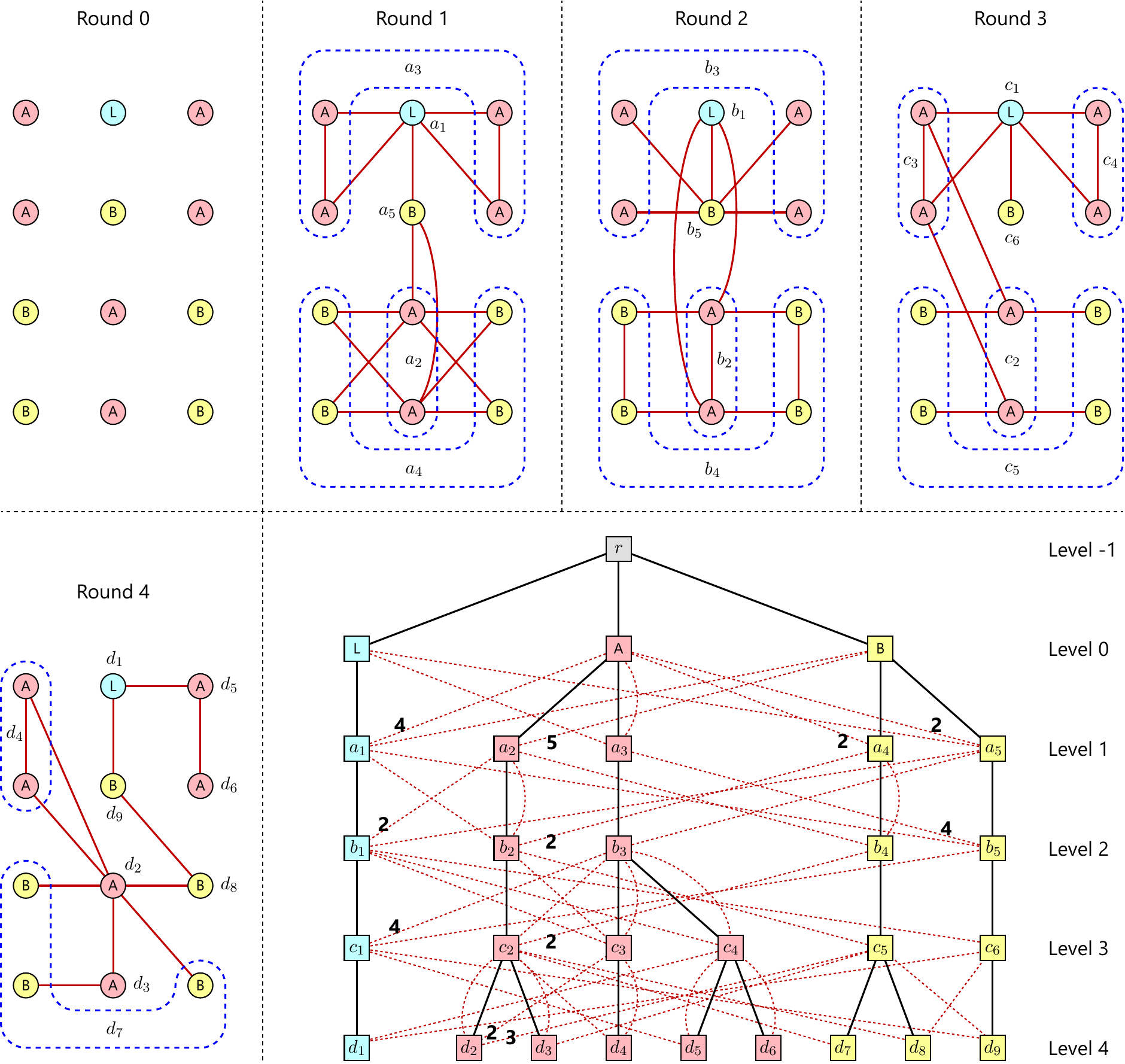}
\caption{The first rounds of a dynamic network with $n=12$ processes and the corresponding levels of the history tree. The letters L, A, B indicate processes' inputs (the process with input L is the leader). At each round, the network's links are represented by solid red edges. Sets of indistinguishable processes are indicated by dashed blue lines and unique labels. The same labels are also reported in the history tree: each node corresponds to a set of indistinguishable processes. It should be noted that labels other than L, A, B are not part of the history tree itself, and have been added for the reader's convenience. The dashed red edges in the history tree represent messages received by processes through the links; the numbers indicate their multiplicities (when greater than $1$). For example, the edge $\{c_2, b_4\}$ has multiplicity~$2$ because, at round~3, each of the two processes in the class~$c_2$ receives two identical messages from the (indistinguishable) processes that were in the class~$b_4$ in the previous round. On the other hand, the node labeled $c_5$ represents four processes (i.e., its \emph{anonymity} is $4$) at round~3; among them, only one receives a message from $c_6$ in the next round. Thus, this process is disambiguated, which causes the node $c_5$ to branch into two nodes: $d_7$, with anonymity~$3$, and $d_8$, with anonymity~$1$. The \emph{view} of the node labeled $b_3$ is the subgraph of the history tree induced by the nodes with labels in the set $\{b_3, a_3, a_5, \mbox{L}, \mbox{A}, \mbox{B}, r\}$.}
\label{xfig:hg}
\end{figure}

\mypar{History of a process.} A \emph{monotonic path} in the history tree is a sequence of nodes in distinct levels, such that any two consecutive nodes are connected by a black or a red edge. We define the \emph{view} of a node $v$ in the history tree as the finite subgraph induced by all the nodes spanned by monotonic paths with endpoints $v$ and $r$. The \emph{history} of a process $p$ at round~$t$ is the view of the (unique) node in $L_t$ that represents a set of processes containing $p$.

\myskip
\mypar{Fundamental theorem.} Intuitively, the history of a process at round~$t$ contains all the information that the process can use at that round for its local computations. This intuition is made precise by the following fundamental theorem (a rigorous proof is found in \cref{as:2.3}):
\begin{theorem}\label{xth:view}
The state of a process $p$ at the end of round~$t$ is determined by a function $\mathcal F_\mathcal A$ of the history of $p$ at round~$t$. The mapping $\mathcal F_\mathcal A$ depends entirely on the local algorithm $\mathcal A$ and is independent of $p$.\qed
\end{theorem}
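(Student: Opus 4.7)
The plan is to prove the theorem by induction on the round number~$t$, constructing $\mathcal F_\mathcal A$ explicitly along the way. For the base case~$t=0$, the state at the end of round~$0$ is, by definition, a deterministic function of the input, and the history of $p$ at round~$0$ is the trivial subgraph consisting of $r$, the node $v\in L_0$ containing $p$, and the black edge between them; since the label of $v$ encodes $p$'s input, the initial state is recovered by reading off that label. So $\mathcal F_\mathcal A$ is defined on height-$1$ histories simply by composing the label-lookup with the input-to-state map induced by $\mathcal A$.

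For the inductive step, assume the statement at round~$t-1$ and let $v\in L_t$ be the node containing $p$. Two ingredients are needed to reconstruct $p$'s state at the end of round~$t$: (i) $p$'s state at the end of round~$t-1$, and (ii) the multiset of messages $p$ receives during round~$t$. Ingredient~(i) comes from the unique black-parent $u\in L_{t-1}$ of $v$, which represents $p$'s equivalence class at round~$t-1$; by the inductive hypothesis, the view of $u$ determines that state. Ingredient~(ii) is obtained by enumerating the red-edge neighbors $u_1,\dots,u_k\in L_{t-1}$ of $v$: by the indistinguishability condition defining $L_{t-1}$, all processes represented by $u_i$ share the same state at the end of round~$t-1$, hence broadcast the same message (itself an encoding of that state), and the multiplicity $m_i$ of the edge $\{u_i,v\}$ records exactly how many identical copies of this message reach $p$. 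Applying the inductive hypothesis to each $u_i$ yields its common state, hence the message, and then the multiset $\{(\text{msg}_i, m_i)\}$ is fully reconstructed. Feeding (i) and (ii) into $\mathcal A$ gives the state at the end of round~$t$, and composing these operations defines $\mathcal F_\mathcal A$ on height-$(t+1)$ histories. All steps are purely structural in the history tree, so the resulting mapping depends only on $\mathcal A$, not on $p$.

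The one combinatorial point that needs care is the verification that the history of $p$ at round~$t$ truly contains, as a subgraph, the view of every $L_{t-1}$-neighbor of $v$ (whether connected by a black or a red edge), so that the inductive hypothesis can be invoked on all of $u, u_1, \dots, u_k$. This follows because any monotonic path from $r$ to $v$ must visit exactly one node of $L_{t-1}$ and arrive at $v$ via a single black or red edge; conversely, every monotonic path from $r$ to any $u_i$ (or to $u$) extends, by appending that edge, to a monotonic path ending at $v$. Hence the nodes spanned by monotonic paths ending at $v$ are precisely $v$ itself together with the union of the nodes spanned by monotonic paths ending at $u, u_1, \dots, u_k$, which is exactly what is needed to separate the history of $p$ at round~$t$ into the piece carrying $p$'s previous state and the pieces carrying the incoming messages. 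I expect this bookkeeping about monotonic paths to be the main (minor) obstacle; once it is settled, the induction proceeds mechanically.
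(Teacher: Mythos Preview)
Your proposal is correct and follows essentially the same approach as the paper's proof (given in full in the appendix as \cref{th:view}): induction on the round number, recovering the previous state from the view of the black parent, recovering the received multiset of messages from the red neighbors in $L_{t-1}$ together with their edge multiplicities, and then applying $\mathcal A$. Your explicit verification that the history at round~$t$ contains the views of all $L_{t-1}$-neighbors of $v$ matches the paper's one-line observation that the view of any node in $H_h$ is contained in the view of $h$.
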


\mypar{Locally constructing the history.} The significance of \cref{xth:view} is that it allows us to shift our focus from dynamic networks to history trees. As shown in \cref{as:2.4}, there is a local algorithm $\mathcal A^\ast$ that allows processes to construct and update their history at every round. Thanks to \cref{xth:view}, processes are guaranteed not to lose any information if they simply execute $\mathcal A^\ast$, regardless of their goal, and then compute their outputs as a function of their history. Thus, in the following, we will assume without loss of generality that a process' internal state at every round, as well as all the messages it broadcasts, always coincide with its history at that round.

\section{Linear-Time Computation}\label{xs:4}
As discussed at the end of \cref{xs:3}, the following algorithms assume that all processes have their current history as their internal state and broadcast their history through all available links at every round. We will show how to solve the Generalized Counting problem in such a setting.

\subsection{Stabilizing Algorithm}\label{xs:4.1}
In a history tree (or in a view), a node $v$ is said to be \emph{non-branching} if it has exactly one child, which we denote as $c(v)$. Recall that the parent-child relation is determined by black edges only.

A pair of non-branching nodes $(v_1, v_2)$ in a history tree is said to be \emph{exposed} with multiplicity $(m_1, m_2)$ if the red edge $\{c(v_1), v_2\}$ is present with multiplicity $m_1\geq 1$, while the red edge $\{c(v_2), v_1\}$ has multiplicity $m_2\geq 1$ (see~\cref{xfig:mix}, left). Note that $v_1$ and $v_2$ must be on the same level.

Thanks to the following lemma, if we know the anonymity of a node in an exposed pair, we can determine the anonymity of the other node. (Recall that we denote the anonymity of $v$ by $a(v)$.)

\begin{lemma}\label{xl:guess1}
If $(v_1, v_2)$ is an exposed pair with multiplicity $(m_1, m_2)$, then $a(v_1)\cdot m_1=a(v_2)\cdot m_2$.
\end{lemma}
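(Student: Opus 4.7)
The plan is to prove the identity by double counting the links of the communication multigraph $G_{t+1}$ that connect processes represented by $v_1$ with processes represented by $v_2$, where $t$ is the common level of $v_1$ and $v_2$. Let $P_i$ be the set of processes represented by $v_i$ at the end of round $t$, so $|P_i|=a(v_i)$, and let $E_{12}$ denote the number of links in $G_{t+1}$ with one endpoint in $P_1$ and the other in $P_2$ (with parallel links and self-loops counted with multiplicity, if applicable).

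First I would use the non-branching hypothesis to identify these sets across levels. Since $v_1$ has only one child $c(v_1)$, every process in $P_1$ belongs to $c(v_1)$ at the end of round $t+1$; hence $P_1$ is exactly the set of processes represented by $c(v_1)$ and $a(c(v_1))=a(v_1)$. The same holds for $v_2$ and $c(v_2)$. This identification is the crucial reason the non-branching assumption appears in the statement: without it, only a subset of $v_i$'s processes would pass to $c(v_i)$, and the counts below would not line up with the anonymities of $v_1$ and $v_2$.

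Next I would count $E_{12}$ in two ways. In the message-passing phase of round $t+1$, each link of $G_{t+1}$ with endpoints $p\in P_1$ and $q\in P_2$ causes $p$ (viewed as a process represented by $c(v_1)$) to receive exactly one message from $q$ (a process represented by $v_2$), and symmetrically for $q$. Summing over all $a(c(v_1))=a(v_1)$ processes of $c(v_1)$ and using the definition of the red multiedge $\{c(v_1),v_2\}$, which states that each such process receives exactly $m_1$ messages from $v_2$, yields $E_{12}=a(v_1)\cdot m_1$. Applying the symmetric argument with the red multiedge $\{c(v_2),v_1\}$ yields $E_{12}=a(v_2)\cdot m_2$. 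Equating the two expressions gives the claim.

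I do not anticipate a real obstacle here: the content of the lemma is a bookkeeping identity about how red-edge multiplicities interact with the underlying communication multigraph. The only care needed is in reading the direction of the red edges correctly (a red edge between levels $t$ and $t+1$ encodes messages \emph{received} at round $t+1$), and in invoking the non-branching assumption to equate $a(v_i)$ with $a(c(v_i))$ so that the per-process counts $m_1, m_2$ can be aggregated into a total over the correct number of processes. Parallel links and self-loops cause no issue because the multigraph convention already folds them uniformly into the message counts defining the red-edge multiplicities.
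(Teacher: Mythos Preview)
Your proposal is correct and follows essentially the same approach as the paper: both argue by double counting the links of $G_{t+1}$ between the process sets $P_1$ and $P_2$, using the non-branching hypothesis to identify $P_i$ with the set represented by $c(v_i)$ so that the red-edge multiplicities $m_1,m_2$ aggregate to $a(v_1)\cdot m_1$ and $a(v_2)\cdot m_2$ respectively. Your write-up is somewhat more detailed than the paper's, but the underlying idea is identical.
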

\begin{proof}
Let $v_1, v_2\in L_t$, and let $P_1$ and $P_2$ be the sets of processes represented by $v_1$ and $v_2$, respectively. Since $v_1$ is non-branching, we have $a(c(v_1))=a(v_1)$, and therefore $c(v_1)$ represents $P_1$, as well. Hence, the number of links between $P_1$ and $P_2$ in $G_{t+1}$ (counted with their multiplicities) is $a(c(v_1))\cdot m_1 = a(v_1)\cdot m_1$. By a symmetric argument, this number is equal to $a(v_2)\cdot m_2$.
\end{proof}

It is well known that, in a 1-interval-connected network, any piece of information can reach all processes in $n-1$ rounds~\cite{KLO10}. We can rephrase this observation in the language of history trees.

\begin{lemma}\label{xl:propamain} Let $P$ be a set of processes in a 1-interval-connected dynamic network of size $n$, such that $1\leq |P|\leq n-1$, and let $t\geq 0$. Then, at every round~$t'\geq t+|P|$, in the history of every process there is a node at level $L_t$ representing at least one process not in $P$.
\end{lemma}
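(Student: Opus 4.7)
The plan is to track the spread of ``outside-$P$ information'' at level $L_t$ through the network. For each round $s \ge t$, I would define
\[
K_s = \bigl\{ q : \View{v_s(q)} \text{ contains some node of } L_t \text{ representing a process outside } P \bigr\},
\]
where $v_s(q) \in L_s$ denotes the node representing the equivalence class of $q$ at round $s$. The lemma then reduces to showing that $K_{t'}$ equals the entire set of $n$ processes for every $t' \ge t + |P|$.

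First I would establish the base case $|K_t|\ge n-|P|\ge 1$: for any $q \notin P$ (at least $n-|P|$ such processes exist), the node $v_t(q)$ lies in its own view and represents $q$, which is outside $P$, so $q \in K_t$. Next, monotonicity of views in the time index yields $K_s \subseteq K_{s+1}$: the black edge $\{v_s(q), v_{s+1}(q)\}$ lets us prepend $v_{s+1}(q)$ to any monotonic path from $v_s(q)$ to $r$, giving $\View{v_s(q)} \subseteq \View{v_{s+1}(q)}$.

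For the growth step, suppose $K_s$ is a proper nonempty subset of the full process set. The connectedness of $G_{s+1}$ supplies a link between some $q \in K_s$ and some $q' \notin K_s$. By the definition of red edges, this link induces a red edge between $v_s(q)$ and $v_{s+1}(q')$; prepending $v_{s+1}(q')$ to any monotonic path from $v_s(q)$ to $r$ shows $\View{v_s(q)} \subseteq \View{v_{s+1}(q')}$, so $q'$ inherits the outside-$P$ node already present in $\View{v_s(q)}$ and is therefore placed in $K_{s+1}$. Hence $|K_{s+1}| \ge |K_s| + 1$ whenever $K_s$ has not yet saturated. Combining this with the base estimate gives $|K_{t+|P|}| \ge (n-|P|) + |P| = n$, so $K_{t+|P|}$ is the full process set; monotonicity then extends this to every $t' \ge t+|P|$, which is the claim.

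The main obstacle is the structural fact underpinning both the monotonicity and the growth step: a receiver really does absorb a sender's entire view in one round, regardless of whether the inheriting edge in the history tree is black or red. The justification is that a node at level $L_{s+1}$ joined by any edge to a node at level $L_s$ can be prepended to a monotonic path leaving the latter without violating the distinct-levels requirement on monotonic paths. Once this observation is made explicit, the remainder is the familiar ``one-new-process-per-round'' propagation count for $1$-interval-connected networks, translated into the language of history trees.
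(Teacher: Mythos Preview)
Your proof is correct and follows essentially the same argument as the paper: both track the set of processes whose current history contains an $L_t$-node representing someone outside $P$, establish the base size $n-|P|$, and use 1-interval-connectedness to gain at least one new process per round until saturation at round $t+|P|$. Your version is slightly more explicit about how black and red edges justify view inheritance, but the structure is identical.
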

\begin{proof}
Let $Q$ be the complement of $P$ (note that $Q$ is not empty), and let $Q_{t+i}$ be the set of processes represented by the nodes in $L_{t+i}$ whose view contains a node in $L_t$ representing at least one process in $Q$. We will prove by induction that $|Q_{t+i}|\geq |Q|+i$  for all $0\leq i\leq |P|$. The base case holds because $Q_t=Q$. The induction step is implied by $Q_{t+i}\subsetneq Q_{t+i+1}$, which holds for all $0\leq i < |P|$ as long as $|Q_{t+i}|<n$. Indeed, because $G_{t+i+1}$ is connected, it must contain a link between a process $p\in Q_{t+i}$ and a process $q\notin Q_{t+i}$. Thus, the history of $q$ at round~$t+i+1$ contains the history of $p$ at round~$t+i$, and so $Q_{t+i}\subsetneq Q_{t+i}\cup\{q\}\subseteq Q_{t+i+1}$.

Now, plugging $i:=|P|$, we get $|Q_{t+|P|}|=n$. In other words, the history of each node in $L_{t+|P|}$ (and hence in subsequent levels) contains a node at level $L_t$ representing a process in $Q$.
\end{proof}

\begin{corollary}\label{xl:propa}
In the history tree of a 1-interval-connected dynamic network, every node at level $L_t$ is in the view of every node at level $L_{t'}$, for all $t'\geq t+n-1$.
\end{corollary}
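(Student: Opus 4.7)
The plan is to derive the corollary almost immediately from Lemma~\ref{xl:propamain} by applying it to the complement of the set of processes represented by the target node. Fix an arbitrary node $u \in L_t$ and an arbitrary node $v \in L_{t'}$ with $t' \geq t + n - 1$; the goal is to show that $u$ appears in the view of $v$. Because the view of $v$ coincides, by definition, with the history at round~$t'$ of every process that $v$ represents, it suffices to prove that each such history contains $u$.

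Let $P_u$ be the set of processes represented by $u$ and set $P := [n] \setminus P_u$. I would first dispose of the trivial case $P = \emptyset$: then $u$ represents all processes, so $u$ is the only node at level $L_t$, and any view that reaches back to level $L_t$ automatically contains it. In the main case, $1 \leq |P| \leq n-1$, and Lemma~\ref{xl:propamain} applies directly with this $P$ and this $t$. The lemma guarantees that for every $t' \geq t + |P|$, the history of every process contains a node in $L_t$ representing at least one process not in $P$, i.e., a process in $P_u$. Since the nodes at level $L_t$ partition the set of all processes, the only node at that level that can represent a process in $P_u$ is $u$ itself. Hence $u$ is in the history of every process at round~$t'$, and therefore in the view of $v$. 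The estimate $|P| \leq n-1$ shows that the corollary's hypothesis $t' \geq t + n - 1$ is strong enough to invoke the lemma in the worst case, which occurs precisely when $u$ represents a single process.

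There is really no hard step here: the combinatorial substance lies entirely in Lemma~\ref{xl:propamain}, and the corollary is obtained by the simple observation that ``some process outside $P$ is represented'' is equivalent to ``the node $u$ appears'', because the classes at level $L_t$ form a partition. The only points requiring mild care are the degenerate case $P = \emptyset$ and the identification of $|P| = n - 1$ as the worst case that forces the $n - 1$ slack in the statement.
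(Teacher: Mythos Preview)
Your proof is correct and follows essentially the same approach as the paper: define $P$ as the complement of the processes represented by the target node, handle the degenerate case $P=\emptyset$ separately, and invoke Lemma~\ref{xl:propamain} in the main case. Your presentation is a bit more explicit (you spell out why the partition property forces the recovered level-$L_t$ node to be $u$ itself), but the argument is identical.
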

\begin{proof}
Let $v\in L_t$, and let $P$ be the set of processes not represented by $v$. If $P$ is empty, then all nodes in $L_{t'}$ are descendants of $v$, and have $v$ in their view. Otherwise, $1\leq|P|\leq n-1$, and \cref{xl:propamain} implies that $v$ is in the view of all nodes in $L_{t'}$.
\end{proof}

\begin{theorem}\label{xth:stab}
There is an algorithm that solves the Generalized Counting problem in 1-interval-connected anonymous dynamic networks with a leader and stabilizes in at most $2n-2$ rounds.
\end{theorem}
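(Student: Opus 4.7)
The plan is to exhibit an algorithm whose state at every round is the process's current history (as justified after \cref{xth:view}) and whose output is the multiset of anonymities of the nodes at level $L_0$, computed by a constraint-propagation procedure on the locally-visible history tree. I would then argue that by round $2n-2$ this procedure is guaranteed to have converged to the correct output at every process.

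First I would fix the propagation rules. The leader is represented by a unique node at every level (since its input is unique), and all such nodes have anonymity~$1$; these form the initial ``seeds.'' From the seeds the set of known anonymities is enlarged by repeatedly applying: (i)~if a node $v$ is non-branching then $a(c(v))=a(v)$; (ii)~if $(v,v')$ is an exposed pair with multiplicity $(m,m')$ and $a(v)$ is known, then $a(v')=a(v)\cdot m/m'$ by \cref{xl:guess1}; (iii)~the anonymity of a parent equals the sum of anonymities of its children, which lets values also flow upward through the black tree. The algorithm iterates these rules to a fixed point and then, for each node $u\in L_0$, sums the anonymities of a complete set of descendants at the deepest level where all values are known.

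The main obstacle is the $2n-2$ bound. The strategy has two phases. The \emph{reach phase} uses \cref{xl:propa}: by round $n-1$ every $L_0$ node is in every current view, so each process is examining the correct set of input classes. The \emph{resolution phase} must show that after at most $n-1$ additional rounds every such $L_0$ anonymity has been deduced by the propagation rules. Here I would track a potential equal to the number of $L_0$-descendants at the current level whose anonymities are not yet known; this potential is at most $n-1$ at the start of the resolution phase because the leader's seed chain already pins one value. I would then argue that every single round of the resolution phase strictly decreases the potential. 1-interval connectivity forces, at each round, at least one link between the set of processes lying under known-anonymity chains and the complementary set; such a link produces in the history tree either a branching event (splitting off a node whose anonymity is forced by its sibling together with rule (iii)) or a new exposed pair between a known-anonymity non-branching node and a previously unknown one, whose value is then fixed by \cref{xl:guess1}. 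Combining both phases gives $(n-1)+(n-1)=2n-2$.

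The hardest part is verifying that the resolution phase truly makes progress every round, rather than occasionally stalling. This will require carefully combining 1-interval connectivity with the invariant that anonymities on each level sum to $n$, so as to guarantee that the link witnessing connectivity between ``known'' and ``unknown'' regions of the current level always lands on a configuration (non-branching parent or exposed pair) from which rules (i)--(iii) can extract a new anonymity. Once this step-progress lemma is established, stabilization at round $2n-2$ follows immediately, since no further rule application can change the deduced anonymities at $L_0$.
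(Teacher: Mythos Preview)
Your approach diverges from the paper's and, as written, has a real gap in the step-progress lemma. The paper does not try to make progress at every round. Instead it observes that since $|L_0|\geq 2$ and $|L_{t-1}|\leq |L_t|\leq n$, by the pigeonhole principle there is a level $L_T$ with $T\leq n-2$ in which \emph{every} node is non-branching. On that level, 1-interval connectivity of $G_{T+1}$ means the exposed pairs (\cref{xl:guess1}) form a connected graph on $L_T$, so from the leader's anonymity~$1$ one immediately computes every anonymity in $L_T$. Then \cref{xl:propa} guarantees that by round $(T+1)+(n-1)\leq 2n-2$ every process's view contains all of $L_T\cup L_{T+1}$, and the algorithm is simply ``find the first level whose nodes are all non-branching and propagate via exposed pairs.'' No per-round potential argument is needed.

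Your resolution-phase argument, by contrast, asserts that the potential (number of current-level nodes with unknown anonymity) strictly decreases each round. This is not justified: the link connecting the ``known'' and ``unknown'' sides may land on a node that is branching, in which case it is \emph{not} part of an exposed pair (exposed pairs require both endpoints to be non-branching), and rule~(iii) does not pin down a single child from a known parent when there are two or more unknown children. Worse, an unknown node that branches increases the count of unknown nodes at the next level, so the potential can go up. You flag this step as the hardest part, but as stated it does not go through; the paper's pigeonhole-to-a-fully-non-branching-level trick is exactly what makes the exposed-pair rule universally applicable and avoids the branching obstruction.
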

\begin{proof}
Let $n>1$, and let $L_T$ be the first level of the history tree whose nodes are all non-branching. Since $|L_{0}|\geq 2$ and $|L_{t-1}|\leq |L_t|\leq n$ for all $t\geq 0$, we have $T\leq n-2$ by the pigeonhole principle.

Let $\ell\in L_T$ be the node corresponding to the leader; we know that $a(\ell)=1$. Since $G_{T+1}$ is connected, the exposed pairs of nodes in $L_T$ must form a connected graph on $L_T$, as well. Hence, thanks to \cref{xl:guess1}, any process that has complete knowledge of $L_T$ and $L_{T+1}$ can use the information that $a(\ell)=1$ to compute the anonymities of all nodes in $L_T$. In fact, by \cref{xl:propa}, every process is able to do so by round~$(T+1)+n-1\leq 2n-2$.

The local algorithm for each process is given in \cref{xl:algorithm1}: Find the first level in your history whose nodes are all non-branching. If such a level exists and contains a node corresponding to the leader, assume this is level $L_T$ and compute the anonymities of all its nodes. Then add together the anonymities of nodes having equal inputs, and give the result as output. Even if some outputs may be incorrect at first, the whole system stabilizes on the correct output by round~$2n-2$.
\end{proof}
If we were to use a strategy such as the one in \cref{xl:algorithm1} to devise a terminating algorithm, we would be bound to fail, as the counterexample in \cref{as:3.1} shows. The problem is that a process has no easy way of knowing whether the first level in its history whose nodes are all non-branching is really $L_T$, and so it may end up terminating too soon with the wrong output. To find a correct termination condition, we will have to considerably develop the theory of history trees.

\begin{figure}
\centering
\includegraphics[scale=0.75]{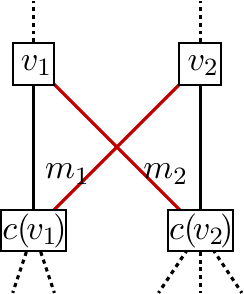}\qquad\quad
\includegraphics[scale=0.75]{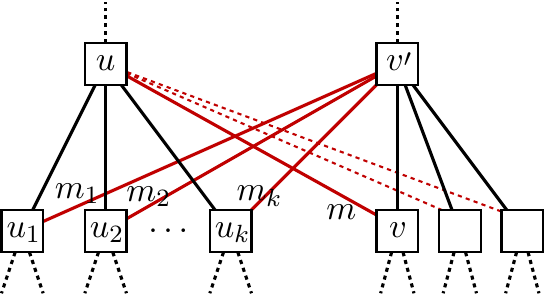}\qquad\quad
\includegraphics[scale=0.75]{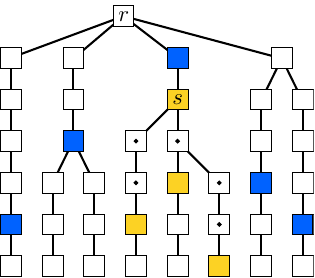}
\caption{Left: an \emph{exposed} pair of nodes. Center: if the anonymities of $u$, $u_1$, $u_2$, \dots, $u_k$ are known, then $v$ is \emph{guessable} by $u$. Right: if the colored nodes are counted, the blue ones form a \emph{counting cut}, and the orange ones define a non-trivial \emph{isle} with root $s$, where the nodes with a dot are internal.}
\label{xfig:mix}
\end{figure}

\lstset{style=mystyle}
\begin{lstlisting}[caption={Algorithm for the Generalized Counting problem that stabilizes in $2n-2$ rounds\label{xl:algorithm1}},captionpos=t,float,abovecaptionskip=-\medskipamount,mathescape=true]
# This local algorithm is executed at every round by each process.
# Input: the current history $\mathcal V$ of the process
# Output: a multiset of inputs received by processes at round $0$

For $t := 0$ to the height of $\mathcal V$
   If level $L_t$ does not contain a leader node, return $\emptyset$
   Assign $a(\ell) := 1$, where $\ell\in L_t$ is the leader node
   Assign ${\rm non\_branching}:={\rm true}$
   For each node $v\in L_t$
      If $v$ does not have exactly one child in $\mathcal V$, assign ${\rm non\_branching}:={\rm false}$
      If $v\neq \ell$, assign $a(v) := 0$
   If ${\rm non\_branching}$
      While there is an exposed pair $(v_1,v_2)$ in $L_t$ with $a(v_1)\neq 0$ and $a(v_2)=0$
         Let $(m_1,m_2)$ be the multiplicity of the exposed pair $(v_1,v_2)$
         Assign $a(v_2):= \left\lceil a(v_1)\cdot m_1/m_2\right\rceil$
      For each node $v\in L_0$
         Assign $a(v):=\sum_{v'\in L_t\text{ descendant of }v} a(v')$
      Return the multiset $\{ ({\rm label}(v), a(v)) \mid v\in L_0 \}$
\end{lstlisting}

\lstset{style=mystyle}
\begin{lstlisting}[caption={Algorithm for the Generalized Counting problem that terminates in $3n-2$ rounds\label{xl:algorithm2}},captionpos=t,float,abovecaptionskip=-\medskipamount,mathescape=true]
# This local algorithm is executed at every round by each process.
# Input: the current history $\mathcal V$ of the process
# Output: either a multiset of inputs received by processes at round $0$
#         or "Unknown"

For each leader node $\ell$ in $\mathcal V$
   Assign $a(\ell):=1$ and mark $\ell$ as counted
While $\mathcal V$ has guessable levels
   Let $v$ be a guessable non-counted node of smallest depth in $\mathcal V$
   Assign a guess $g(v)$ to $v$ as in $\text{\cref{xe:guess}}$ and mark $v$ as guessed
   Let $P_v$ be the black path from $v$ to the root $r$ of $\mathcal V$
   If there is a heavy node in $P_v$
      Let $v'$ be the heavy node in $P_v$ of maximum depth
      Assign $a(v'):=g(v')$; mark $v'$ as counted and not guessed
      If $v'$ is the root or a leaf of a non-trivial complete isle $I$
         For each internal node $w$ of $I$
            Assign $a(w)=\sum_{w'\text{ leaf of }I\text{ and descendant of }w} a(w')$
            Mark $w$ as counted and not guessed
Assign $C:=\emptyset$
For each node $v$ in $\mathcal V$ marked as counted
   Let $P_v$ be the black path from $v$ to the root $r$ of $\mathcal V$
   Let $v'$ be the counted node in $P_v$ of minimum depth
   Assign $C:=C\cup \{v'\}$
If $C$ is not a counting cut of $\mathcal V$, return "Unknown"
Let $L_t$ be the level of $\mathcal V$ containing the deepest node of $C$
Let $L_{t'}$ be the deepest level of $\mathcal V$
Let $n'=\sum_{v\in C}a(v)$
If $t'<t+n'$, return "Unknown"
For each node $v\in L_0$
   Assign $a(v):=\sum_{v'\in C\text{ descendant of }v} a(v')$
Return the multiset $\{ ({\rm label}(v), a(v)) \mid v\in L_0 \}$ and enter a terminal state
\end{lstlisting}

\subsection{Terminating Algorithm}\label{xs:4.2}

\mypar{Guessing anonymities.} Inspired by \cref{xl:guess1}, we now describe a more sophisticated way of estimating the anonymity of a node based on known anonymities (see~\cref{xfig:mix}, center). Let $u$ be a node of a history tree, and assume that the anonymities of all its children $u_1$, $u_2$, \dots, $u_k$ are known: such a node $u$ is called a \emph{guesser}. If $v$ is not among the children of $u$ but is at their same level, and the red edge $\{v, u\}$ is present with multiplicity $m\geq 1$, we say that $v$ is \emph{guessable} by $u$. In this case, we can make a \emph{guess} $g(v)$ on the anonymity of $v$:
\begin{equation}\label{xe:guess}
g(v)=\left\lceil \frac{a(u_1)\cdot m_1+a(u_2)\cdot m_2+\dots+a(u_k)\cdot m_k}m\right\rceil,
\end{equation}
where $m_i$ is the multiplicity of the red edge $\{u_i, v'\}$ for all $1\leq i\leq k$, and $v'$ is the parent of $v$ (possibly, $m_i=0$). Although a guess may be inaccurate, it never underestimates the anonymity:

\begin{lemma}\label{xl:guess2}
If $v$ is guessable, then $g(v)\geq a(v)$. Moreover, if $v$ has no siblings, $g(v)=a(v)$.
\end{lemma}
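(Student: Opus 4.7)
I will translate both sides of the guess formula into counts of edges of the graph $G_{t+1}$ between the relevant classes of processes, and then compare them. Let $u\in L_t$ with children $u_1,\dots,u_k\in L_{t+1}$, let $v'\in L_t$ be the parent of $v\in L_{t+1}$, and denote by $N(A,B)$ the number of edges of $G_{t+1}$ (counted with multiplicities) between sets of processes $A$ and $B$. Unpacking the definition of red edges and their multiplicities (\cref{xs:3}):
\begin{itemize}
\item the red edge $\{u,v\}$ of multiplicity $m$ says that each process in $v$ is incident in $G_{t+1}$ to exactly $m$ edges going to processes of $u$, so $N(u,v)=a(v)\cdot m$;
\item the red edge $\{u_i,v'\}$ of multiplicity $m_i$ says that each process in $u_i$ is incident in $G_{t+1}$ to exactly $m_i$ edges going to processes of $v'$, so $N(u_i,v')=a(u_i)\cdot m_i$ (interpreting a missing red edge as $m_i=0$).
\end{itemize}

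\textbf{Key identity and inequality.} First I would observe that $u$ is partitioned at round $t+1$ into its children, hence $N(u,v)=\sum_{i=1}^k N(u_i,v)$. Next, since $v$ is one of the children of $v'$ and the children of $v'$ partition the processes of $v'$ at round $t+1$, we have
\[
N(u_i,v')=\sum_{w\text{ child of }v'} N(u_i,w)\ \geq\ N(u_i,v),
\]
with equality for every $i$ exactly when $v$ is the unique child of $v'$, i.e., when $v$ has no siblings. Combining this with the two identities above gives
\[
a(v)\cdot m\ =\ N(u,v)\ =\ \sum_{i=1}^k N(u_i,v)\ \leq\ \sum_{i=1}^k N(u_i,v')\ =\ \sum_{i=1}^k a(u_i)\cdot m_i,
\]
and dividing by $m\geq 1$ and taking ceilings yields $a(v)\leq g(v)$, which is the first claim. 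If $v$ has no siblings, the middle inequality is an equality, so $\sum_i a(u_i)m_i/m$ is already the integer $a(v)$ and the ceiling has no effect, giving $g(v)=a(v)$.

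\textbf{Expected obstacle.} The proof itself is a short double-counting argument; the only real care needed is bookkeeping the semantics of red edges (which endpoint the multiplicity is counted ``per process'' from) and matching the indices $m,m_i$ of \eqref{xe:guess} to edges of $G_{t+1}$ rather than to any structure internal to the history tree. Once this dictionary is set up, the inequality $N(u_i,v)\leq N(u_i,v')$, which encodes the fact that refining $v'$ into its children can only split but not create edges to $u_i$, immediately drives both halves of the lemma.
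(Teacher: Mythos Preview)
Your proposal is correct and follows essentially the same approach as the paper: both arguments double-count the links of $G_{t+1}$ between the processes represented by $u$ and those represented by $v'$, obtaining $\sum_i a(u_i)m_i$ from the $u$-side and isolating the contribution $a(v)\cdot m$ on the $v'$-side. Your write-up is somewhat more explicit in introducing the intermediate quantities $N(u_i,v)$ and the inequality $N(u_i,v)\le N(u_i,v')$, whereas the paper compresses this into a single identity $\sum_i a(u_i)m_i=\sum_i a(v_i)m'_i$ over all children $v_i$ of $v'$ and then picks out the term $a(v)m$; but these are the same computation.
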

\begin{proof}
Let $u,v'\in L_t$, and let $P_1$ and $P_2$ be the sets of processes represented by $u$ and $v'$, respectively. By counting the links between $P_1$ and $P_2$ in $G_{t+1}$ in two ways, we have $\sum_i a(u_i)\,m_i = \sum_i a(v_i)\,m'_i$, where the two sums range over all children of $u$ and $v'$, respectively (note that $v=v_j$ for some $j$), and $m'_i$ is the multiplicity of the red edge $\{v_i,u\}$ (so, $m=m'_j$). Our lemma easily follows.
\end{proof}

\mypar{Heavy nodes.} Even if a node is guessable, it is not always a good idea to actually assign it a guess. For reasons that will become apparent in \cref{xl:limit}, our algorithm will only assign guesses in a \emph{well-spread} fashion, i.e., in such a way that at most one node per level is assigned a guess.

Suppose now that a node $v$ has been assigned a guess. We define its \emph{weight} $w(v)$ as the number of nodes in the subtree hanging from $v$ that have been assigned a guess (this includes $v$ itself). Recall that subtrees are determined by black edges only. We say that $v$ is \emph{heavy} if $w(v)\geq g(v)$.

\begin{lemma}\label{xl:limit}
In a well-spread assignment of guesses, if $w(v)>a(v)$, then some descendants of $v$ are heavy (the \emph{descendants} of $v$ are the nodes in the subtree hanging from $v$ other than $v$ itself).
\end{lemma}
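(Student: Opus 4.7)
The plan is to prove the contrapositive by strong induction on the size of the subtree $T_v$ inside the finite view $\mathcal V$: if no strict descendant of $v$ is heavy, then $w(v)\leq a(v)$. The base case, where $v$ is a leaf of $\mathcal V$, is immediate since $w(v)\in\{0,1\}$ and $a(v)\geq 1$. For the inductive step let $v$ have children $u_1,\ldots,u_k$; the hypothesis passes to each $T_{u_i}$ and also ensures that each $u_i$ is itself not heavy, so by the inductive hypothesis $w(u_i)\leq a(u_i)$, and summing yields $\sum_i w(u_i)\leq a(v)$. If $v$ is not guessed, $w(v)=\sum_i w(u_i)\leq a(v)$, and we are done.

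If $v$ is guessed, $w(v)=1+\sum_i w(u_i)\leq a(v)+1$, and the only case left is the equality $\sum_i w(u_i)=a(v)$, which forces every child to be \emph{tight}, meaning $w(u_i)=a(u_i)$. I dispatch this case by descending through non-guessed tight non-leaf children. By well-spread, at most one of $v$'s children is guessed; the same equality analysis applied to any non-guessed tight node shows that all of its children are again tight; and a non-guessed tight node cannot have either a leaf child of anonymity $\geq 2$ (tightness would fail at that leaf) or a non-guessed leaf child (forcing anonymity~$0$), so every leaf child of a non-guessed tight node is a guessed node of anonymity~$1$.

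I start the descent at $v$ and iteratively move to a non-guessed tight non-leaf child. At a unique-child step, if the child were guessed it would have no siblings and \cref{xl:guess2} would give $g=a=w$, making it heavy and contradicting the hypothesis; hence the child is non-guessed and the descent continues. At a branching step with $k\geq 2$ tight children, well-spread allows at most one guessed child, so at least one non-guessed tight child remains, and it is non-leaf by the previous paragraph; descending into it strictly decreases the anonymity (since $k\geq 2$ forces each child to have $a$ strictly smaller than the parent). Anonymity is preserved along unique-child chains and strictly decreases at every branching, so it eventually reaches~$1$; but an $a=1$ subtree is a chain of unique children, and following that chain in the finite view yields a guessed leaf of anonymity~$1$ with no siblings in the chain, which is heavy by \cref{xl:guess2}---the desired contradiction. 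The main obstacle is the branching case where a guessed child has $g>a$ and so escapes being heavy despite being tight; well-spread saves us by ensuring that a non-guessed tight sibling always exists for the descent to sidestep it.
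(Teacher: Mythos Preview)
Your overall strategy---induct on subtree size, reduce to the tight case $\sum_i w(u_i)=\sum_i a(u_i)=a(v)$, then descend through non-guessed tight nodes until a sibling-free guessed node appears---is sound and does ultimately produce a heavy descendant. However, the final paragraph has a genuine gap. You assert that ``anonymity \dots\ eventually reaches~$1$'' because it strictly decreases at branchings, but this does not follow: the descent might involve only finitely many branchings and then enter a unique-child chain with constant anonymity $a^*>1$. Nothing you have said forces enough branchings to drive the anonymity down to~$1$. What actually terminates the argument is simply that the view is finite: you have shown that at every step either a heavy node is found (unique guessed child with no siblings, so $g=a=w$) or the step produces a non-guessed tight non-leaf child to descend into; an infinite descent in a finite tree is impossible, so some step must yield the heavy node. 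Replace the ``reaches~$1$'' sentence with this finiteness observation and the proof closes.

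For comparison, the paper's proof is shorter and avoids the descent entirely. It inducts on $w(v)$ rather than on subtree size, and instead of walking through all children it jumps straight to the \emph{immediate guessed descendants} $v_1,\dots,v_k$ of $v$ (those guessed nodes with no guessed node strictly between them and $v$). The same tightness computation gives $w(v_i)=a(v_i)$ and $a(v)=\sum_i a(v_i)$; then the paper simply picks the deepest $v_d$ among the $v_i$ (unique by well-spread) and observes that $v_d$ can have no siblings, since any sibling's subtree would contain either another $v_i$ deeper than $v_d$ or uncounted anonymity violating $a(v)=\sum_i a(v_i)$. Thus $g(v_d)=a(v_d)=w(v_d)$ by \cref{xl:guess2}, and $v_d$ is heavy. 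Your descent is effectively searching for this $v_d$ one level at a time; the paper's shortcut of working only with guessed nodes makes the sibling-free conclusion immediate and sidesteps the termination bookkeeping altogether.
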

\begin{proof}
Our proof is by well-founded induction on $w(v)$. Assume for a contradiction that no descendants of $v$ are heavy. Let $v_1$, $v_2$, \dots, $v_k$ be the ``immediate'' descendants of $v$ that have been assigned guesses. That is, for all $1\leq i\leq k$, no internal nodes of the black path with endpoints $v$ and $v_i$ have been assigned guesses (observe that $k\geq 1$ because, by assumption, $w(v)>1$).

By the basic properties of history trees, $a(v)\geq \sum_i a(v_i)$. Also, the induction hypothesis implies that $w(v_i) \leq a(v_i)$ for all $1\leq i\leq k$, or else one of the $v_i$'s would have a heavy descendant. Therefore, $w(v)-1 = \sum_i w(v_i) \leq \sum_i a(v_i)\leq a(v) \leq w(v)-1$. It follows that $w(v_i)=a(v_i)$ and $a(v)=\sum_i a(v_i)$.

Let $v_d$ be the deepest of the $v_i$'s, which is unique, since the assignment of guesses is well spread. Note that $v_d$ has no siblings at all, otherwise we would have $a(v)>\sum_i a(v_i)$. Due to \cref{xl:guess2}, we conclude that $g(v_d)=a(v_d)=w(v_d)$, and so $v_d$ is heavy.
\end{proof}

\mypar{Correct guesses.} We say that a node $v$ has a \emph{correct} guess if $v$ has been assigned a guess and $g(v)=a(v)$. The next lemma gives a criterion to determine if a guess is correct.

\begin{lemma}\label{xl:crit}
In a well-spread assignment of guesses, if a node $v$ is heavy and no descendant of $v$ is heavy, then $v$ has a correct guess.
\end{lemma}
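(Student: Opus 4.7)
The plan is to derive \cref{xl:crit} by chaining together three facts that are already at our disposal, with no new combinatorial argument needed. Concretely, I would observe that the hypothesis ``no descendant of $v$ is heavy'' is exactly the situation in which the contrapositive of \cref{xl:limit} applies to $v$. That contrapositive reads: if no descendant of $v$ is heavy, then $w(v)\leq a(v)$. So the first step is simply to invoke \cref{xl:limit} in this contrapositive form to obtain $w(v)\leq a(v)$.

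Next I would bring in the definition of heaviness together with \cref{xl:guess2}. Since $v$ is heavy, we have by definition $w(v)\geq g(v)$; and since $v$ has been assigned a guess, \cref{xl:guess2} gives $g(v)\geq a(v)$. Chaining these with the inequality from the first step yields
\[
g(v)\leq w(v)\leq a(v)\leq g(v),
\]
which forces $g(v)=a(v)$ (and incidentally $w(v)=a(v)$ as well). This is precisely the assertion that the guess on $v$ is correct, completing the proof.

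I do not anticipate a genuine obstacle here: the whole point of how \emph{heavy} and \emph{well-spread} were set up in \cref{xl:limit} is to make this squeeze work. The one subtlety to flag carefully is that in the statement of \cref{xl:limit} the word ``descendants'' excludes $v$ itself, so the hypothesis of \cref{xl:crit} (no descendant of $v$ is heavy) matches the contrapositive hypothesis exactly, and the fact that $v$ itself is heavy is not in conflict with the assumption. The only other thing to double-check while writing the proof is that $v$ has indeed been assigned a guess (so that $g(v)$ and the inequality $g(v)\geq a(v)$ from \cref{xl:guess2} are meaningful), which is implicit in saying that $v$ is heavy, since weight and heaviness are only defined for guessed nodes.
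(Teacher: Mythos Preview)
Your proposal is correct and is essentially identical to the paper's own proof: both use the contrapositive of \cref{xl:limit} to get $w(v)\leq a(v)$, the definition of heavy to get $g(v)\leq w(v)$, and \cref{xl:guess2} to get $a(v)\leq g(v)$, then close the loop. The care you take in noting that ``descendant'' excludes $v$ and that heaviness presupposes a guess is appropriate and matches the paper's implicit assumptions.
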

\begin{proof}
Because $v$ is heavy, $g(v)\leq w(v)$. Since $v$ has no heavy descendants, \cref{xl:limit} implies $w(v)\leq a(v)$. Also, by \cref{xl:guess2}, $a(v)\leq g(v)$. We conclude that $g(v)\leq w(v)\leq a(v)\leq g(v)$, and therefore $g(v)=a(v)$.
\end{proof}
When the criterion in \cref{xl:crit} applies to a node $v$, we say that $v$ has been \emph{counted}. So, counted nodes are nodes that have been assigned a guess, which was then confirmed to be correct.

\myskip
\mypar{Cuts and isles.}  Fix a view $\mathcal{V}$ of a history tree $\mathcal H$. A set of nodes $C$ in $\mathcal{V}$ is said to be a \emph{cut} for a node $v\notin C$ of $\mathcal{V}$ if two conditions hold: (i)~for every leaf $v'$ of $\mathcal{V}$ that lies in the subtree hanging from $v$, the black path from $v$ to $v'$ contains a node of $C$, and (ii)~no proper subset of $C$ satisfies condition~(i). A cut for the root $r$ whose nodes are all counted is said to be a \emph{counting cut} (see~\cref{xfig:mix}, right).

Let $s$ be a counted node in $\mathcal{V}$, and let $F$ be a cut for $v$ whose nodes are all counted. Then, the set of nodes spanned by the black paths from $s$ to the nodes of $F$ is called \emph{isle}; $s$ is the \emph{root} of the isle, while each node in $F$ is a \emph{leaf} of the isle (see~\cref{xfig:mix}, right). The nodes in an isle other than the root and the leaves are called \emph{internal}. An isle is said to be \emph{trivial} if it has no internal nodes.

If $s$ is an isle's root and $F$ is its set of leaves, we have $a(s)\geq \sum_{v\in F} a(v)$, because $s$ may have some descendants in the history tree $\mathcal H$ that do not appear in the view $\mathcal{V}$. If equality holds, then the isle is said to be \emph{complete}; in this case, we can easily compute the anonymities of all the internal nodes by adding up anonymities starting from the nodes in $F$ and working our way up to $s$.

\myskip
\mypar{Algorithm overview.} Our counting algorithm repeatedly assigns guesses to nodes based on known anonymities (starting from the nodes corresponding to the leader). Eventually some nodes become heavy, and the criterion in \cref{xl:crit} causes the deepest of them to become counted. In turn, counted nodes eventually form isles; the internal nodes of complete isles are marked as counted, which gives rise to more guessers, and so on. In the end, if a counting cut has been created, a simple condition determines whether the anonymities of its nodes add up to the correct count, $n$.

\myskip
\mypar{Algorithm details.} Our complete counting algorithm is found in \cref{xl:algorithm2}. The algorithm takes as input a view $\mathcal V$ (which, we recall, is the history of a process) and uses flags to mark nodes as ``guessed'' or ``counted''; initially, no node is marked. Thanks to these flags, we can check if a node $u\in \mathcal V$ is a guesser: let $u_1$, $u_2$, \dots, $u_k$ be the children of $u$ that are also in $\mathcal V$ (recall that a view does not contain all nodes of a history tree); $u$ is a \emph{guesser} if and only if it is marked as counted, all the $u_i$'s are marked as counted, and $a(u)=\sum_i a(u_i)$.

The algorithm will ensure that nodes marked as guessed are well-spread at all times; if a level of $\mathcal V$ contains a guessed node, it is said to be \emph{locked}. A level $L_t$ is \emph{guessable} if it is not locked and has a non-counted node $v$ that is guessable, i.e., there is a guesser $u$ in $L_{t-1}$ and the red edge $\{v,u\}$ is present in $\mathcal V$ with positive multiplicity.

The algorithm starts by assigning an anonymity of~$1$ to all leader nodes, marking them as counted. Then, as long as there are guessable levels, it keeps assigning guesses to non-counted nodes. When a guess is made on a node $v$,  some nodes in the path from $v$ to the root may become heavy; if so, the algorithm marks the deepest heavy node $v'$ as counted. Furthermore, if the newly counted node $v'$ is the root or a leaf of a complete isle $I$, then the anonymities of all the internal nodes of $I$ are determined, and such nodes are marked as counted (this also unlocks their levels if such nodes were marked as guessed).

Finally, when there are no more guessable levels, the algorithm checks if the termination condition is satisfied, as follows. Each counting cut yields an estimate of $n$, which is given by the sum of the anonymities of its nodes. The terminating condition is satisfied if and only if there is a counting cut $C$ whose total anonymity $n'$ is not greater than the difference between the current round $t'$ and the round $t$ corresponding to the deepest node of $C$ (note that $t'$ can be inferred from the height of the view $\mathcal V$). If so, $n'$ is guaranteed to be equal to the correct number of processes $n$, and the anonymities of the nodes in $L_0$ can be easily computed from $C$.

\myskip
\mypar{Invariants.} We argue that the above algorithm maintains some \emph{invariants}, i.e., conditions that are satisfied every time Line~9 is reached. Namely, (i)~the nodes marked as guessed are well spread, (ii)~there are no heavy nodes, and (iii)~all complete isles are trivial.

These can be verified by induction: (i)~the algorithm never makes a guess in a locked level; (ii)~as soon as a new guess in Line~10 creates some heavy nodes, the deepest one becomes counted (Line~14), making all other nodes non-heavy (in Lines~16--18, weights may only decrease, and no heavy nodes are created); (iii)~as soon as a complete isle $I$ is created due to a node $v'$ being marked as counted in Line~14, $I$ is immediately reduced to a set of trivial isles (Lines~16--18).

The invariants also imply that Lines~10--14 are correct: if no nodes are heavy to begin with, the new guessed node $v$ may create heavy nodes only on the black path from $v$ to $r$. Thus, the first heavy node along this path has a correct guess due to \cref{xl:crit}. Furthermore, since the anonymities assigned in Line~14 are correct, then so are the ones assigned in Line~17.

\myskip
\mypar{Termination condition.} We will now prove that Lines~19--31 are correct: the algorithm indeed gives the correct output if the termination condition is met. We already know that the anonymities computed for the nodes of the counting cut $C$ are correct, and hence we only have to prove that the set $P$ of processes represented by the nodes of $C$ includes all processes. Assume the contrary; \cref{xl:propamain} implies that, if $t'\geq t+|P|=t+n'$, there is a node $z\in L_t$ representing some process not in $P$. Thus, the black path from $z$ to the root $r$ does not contain any node of $C$, contradicting the fact that $C$ is a counting cut whose deepest node is in $L_t$. So, the termination condition is correct.

\myskip
\mypar{Running time.} We have proved that the algorithm is correct; we will now study its running time.

\begin{lemma}\label{xl:bound1}
Whenever Line~9 is reached, at most $n-1$ levels are locked.
\end{lemma}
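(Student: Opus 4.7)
The plan is to bound the total number of guessed nodes in $\mathcal V$: by invariant~(i) each locked level contains exactly one guessed node, so the number of locked levels equals the number of guessed nodes. I would partition the guessed nodes according to their most-ancestral guessed ancestor. Let $v_1,\dots,v_j$ be the \emph{maximal} guessed nodes, i.e., those guessed nodes with no guessed proper ancestor in $\mathcal V$. By invariant~(i) they lie on distinct levels, and by maximality no $v_i$ is an ancestor of $v_k$ for $i\neq k$, so the subtrees hanging from them (with respect to the black edges) are pairwise disjoint. Every guessed node lies in exactly one such subtree, hence the number of guessed nodes equals $\sum_{i=1}^{j} w(v_i)$.

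Next I would convert weights into anonymities. By invariant~(ii) no node of $\mathcal V$ is heavy; in particular no strict descendant of any $v_i$ is heavy, so the contrapositive of \cref{xl:limit} gives $w(v_i)\leq a(v_i)$ for every $i$. Because the subtrees of the $v_i$'s are pairwise disjoint, the sets of processes they represent are also pairwise disjoint, and therefore $\sum_i a(v_i)\leq n$.

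To strengthen the bound from $n$ to $n-1$ I would use the leader. If $\mathcal V$ contains no leader node, then Lines~7--8 count nothing, no node ever becomes a guesser, no guess is ever made, and no level is ever locked, so the claim is trivial. Otherwise the leader's $L_0$-node $\ell$ is itself in $\mathcal V$: given any leader node $\ell_t\in L_t$ appearing in $\mathcal V$, a monotonic path from the distinguished view-node through $\ell_t$ down to $r$ can be concatenated with the strictly-decreasing black ancestor chain $\ell_t\to\ell_{t-1}\to\dots\to\ell\to r$ to produce a monotonic path that pulls $\ell_{t-1},\dots,\ell$ into $\mathcal V$. By Lines~7--8 this $\ell$ is counted and therefore not guessed, so $\ell\neq v_i$ for all $i$; moreover $\ell$'s only ancestor in $\mathcal V$ is the root $r$, which can never be guessable (it has no parent level, so no incoming red edge can expose it) and hence is never guessed. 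Thus $\ell$ lies in no $v_i$'s subtree, the leader process is not represented by any $v_i$, and $\sum_i a(v_i)\leq n-1$. The main substantive step is the weight-to-anonymity conversion via \cref{xl:limit}; the rest is careful bookkeeping around the leader and the root.
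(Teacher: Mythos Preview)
Your overall strategy is sound and closely parallels the paper's argument: both proofs hinge on \cref{xl:limit} together with invariant~(ii) to convert weights into anonymities, and both use that leader nodes are never guessed to shave the bound from $n$ to $n-1$. Your organization is in fact slightly cleaner than the paper's---you give a direct bound via the maximal guessed nodes $v_1,\dots,v_j$, whereas the paper argues by contradiction through a pigeonhole step over the non-leader nodes of $L_0$ and an auxiliary claim.

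There is, however, one genuine slip in your final step. You establish that $\ell$ lies in no $v_i$'s subtree (because neither $\ell$ nor its sole ancestor $r$ is guessed), and then conclude that ``the leader process is not represented by any $v_i$''. That implication is backwards. Saying the leader is represented by $v_i$ means $v_i$ is a leader node, i.e.\ $v_i$ lies in the subtree hanging from $\ell$, not that $\ell$ lies in the subtree hanging from $v_i$. Your argument rules out $v_i$ being an \emph{ancestor} of $\ell$, but what you need is that $v_i$ is not a \emph{descendant} of $\ell$. The fix is immediate and uses an ingredient you already touched: Lines~6--7 mark \emph{every} leader node in $\mathcal V$ as counted (not just the $L_0$ one), and counted nodes are never simultaneously guessed; hence no $v_i$ can be a leader node, the leader process lies outside $\bigcup_i \rho^{-1}(v_i)$, and $\sum_i a(v_i)\le n-1$ follows. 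With this correction your proof goes through.
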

\begin{proof}
We will prove that, if the subtree hanging from a node $v$ of $\mathcal V$ contains more than $a(v)$ guessed nodes, then it contains a guessed node $v'$ such that $w(v')>a(v')$. The proof is by well-founded induction based on the subtree relation in $\mathcal V$. If $v$ is guessed, then we can take $v'=v$. Otherwise, by the pigeonhole principle, $v$ has at least one child $u$ whose hanging subtree contains more than $a(u)$ guessed nodes. Thus, $v'$ is found in this subtree by the induction hypothesis.

Assume for a contradiction that at least $n$ levels of $\mathcal V$ are locked; hence, $\mathcal V$ contains at least $n$ guessed nodes. None of the nodes representing the leader is ever guessed, because all of them are marked as counted in Lines~6--7. Hence, all of the guessed nodes must be in the subtrees hanging from the nodes in $L_0$ representing non-leader nodes, whose total anonymity is $n-1$. Thus, by the pigeonhole principle, the subtree of one such node $v\in L_0$ contains more than $a(v)$ guessed nodes. The node $v$ must be in $\mathcal V$, and therefore the subtree hanging from $v$ contains a guessed node $v'$ such that $w(v')>a(v')$.

Since the algorithm's invariant~(i) holds, we can apply \cref{xl:limit} to $v'$, which implies that there exist heavy nodes. In turn, this contradicts invariant~(ii). We conclude that at most $n-1$ levels are locked.
\end{proof}

\begin{lemma}\label{xl:bound2}
Assume that all the levels of the history tree up to $L_t$ are entirely contained in the view $\mathcal V$. Then, whenever Line~9 is reached and a counting cut has not been created yet, there are at most $n-2$ levels in the range from $L_1$ to $L_{t}$ that lack a guessable non-counted node.
\end{lemma}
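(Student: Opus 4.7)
The plan is to adapt the weight-plus-pigeonhole argument of \cref{xl:bound1}, extending it to account for the new kind of ``bad'' levels via ``virtual guesses''. I would first argue that, under the lemma's hypotheses, a level in $[L_1, L_t]$ lacks a guessable non-counted node precisely when it is a \emph{case-3 level} (not locked, not all-counted, and with no non-counted node adjacent to a guesser in the previous level). An all-counted level inside $[L_1, L_t]$ would form a counting cut, because every leaf of $\mathcal V$ lies at level $\ge t$ when all levels up to $L_t$ are fully contained in $\mathcal V$; a locked level always has its guessed node as a guessable non-counted witness, since within a single run on the fixed view $\mathcal V$ both ``counted'' and ``all children counted'' are monotone status flags, so whatever guesser certified the original guess remains a guesser.

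For each case-3 level $L_i$ in $[L_1, L_t]$ I would pick an arbitrary non-counted node $v_i \in L_i$ (which exists since $L_i$ is not fully counted) and combine it with the set of actual guessed nodes into a well-spread family of extended witnesses---well-spread because case-3 and locked levels are disjoint and each case-3 level contributes exactly one $v_i$. Defining
\[w^\ast(v) := w(v) + |\{i : v_i \text{ lies in the subtree hanging from } v\}|,\]
the main step is to prove $w^\ast(v) \le a(v)$ for every $v \in \mathcal V$ by mimicking the inductive proof of \cref{xl:limit}. The main obstacle, which I expect to be the hardest part, is handling the case where the ``deepest immediate witness'' $v_d$ isolated by the induction is a virtual witness instead of an actual guessed node: the usual heavy-node contradiction with invariant~(ii) no longer applies directly. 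I would instead exploit the equality case forced by the induction ($a(v_d) = w^\ast(v_d) = 1$, and $v_d$ has no siblings) together with the non-guessability of $v_d$: the unique process represented by $v_d$ must, by the connectivity of $G_i$, have a neighbor whose $L_{i-1}$ class is a red-neighbor of $v_d$, and using the chain of singleton ancestors of $v_d$ and the counted-ness of the leader chain, one can force this $L_{i-1}$ class to be a guesser, contradicting that $L_i$ is case-3.

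Once $w^\ast(v) \le a(v)$ is established, summing over $L_0$ nodes and noting that the leader contributes $w^\ast(\ell_0) = 0 < 1 = a(\ell_0)$ yields
\[\sum_{v_0 \in L_0} w^\ast(v_0) \le n - 1,\]
whose left-hand side is the total number of current actual guesses in $\mathcal V$ plus the number $\beta$ of case-3 levels in $[L_1, L_t]$. To sharpen the bound from $\beta \le n - 1$ to $\beta \le n - 2$, I would use the hypothesis that Line~9 is being reached: the while-loop condition forces the existence of at least one guessable non-counted node $v_g$ in $\mathcal V$, and $v_g$ is not one of the chosen witnesses. Its presence in some non-leader $L_0$ subtree provides an extra unit of anonymity not charged to $w^\ast$ there, producing a strict inequality $w^\ast(v_0) < a(v_0)$ at that $L_0$ ancestor and yielding the final bound $\beta \le n - 2$.
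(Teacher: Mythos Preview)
Your approach diverges from the paper's in a way that creates a real gap. The paper does not pick an arbitrary non-counted node in each bad level; instead it uses connectivity of $G_i$ to locate, in every level $L_i$ that lacks a guessable non-counted node, a \emph{counted} node $v$ whose parent is not a guesser, so that $v$ has a non-counted parent or a non-counted sibling (a ``bad'' node). The bound is then obtained by a tree-deletion induction on the subtree of $\mathcal V$ consisting of levels up to $L_t$: one removes the black path from the deepest bad node up to its first branching ancestor, and invariant~(iii) (all complete isles are trivial) together with the countedness of the bad node guarantees that this deletion removes exactly one bad node and exactly one non-leader leaf without creating counting cuts or non-trivial isles. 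That is where working with counted nodes, rather than arbitrary non-counted ones, is essential.

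Your extended-weight argument tries to place the virtual witnesses $v_i$ on the same footing as guessed nodes, but the contradiction in \cref{xl:limit} relies on the deepest immediate witness $v_d$ becoming heavy via $g(v_d)=a(v_d)$ from \cref{xl:guess2}, which only makes sense when $v_d$ carries an actual guess. Your proposed rescue in the virtual case asserts that the equality case forces $a(v_d)=w^\ast(v_d)=1$; this is not justified. The equality case yields $w^\ast(v_d)=a(v_d)$ and that $v_d$ has no siblings, but nothing prevents $v_d$ from having further witnesses strictly below it, so $w^\ast(v_d)$ need not equal~$1$. Even granting $a(v_d)=1$, the claim that ``the chain of singleton ancestors and the counted-ness of the leader chain force a red neighbour of $v_d$ in $L_{i-1}$ to be a guesser'' has no evident mechanism: $v_d$ is non-leader (leader nodes are all counted), its singleton ancestors are non-leader and need not be counted, and connectivity only supplies \emph{some} red edge out of $v_d$, not one landing on a guesser. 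Finally, the sharpening from $n-1$ to $n-2$ is also unsound as written: the mere presence of a non-witness node $v_g$ in the subtree below some $v_0\in L_0$ does not make $w^\ast(v_0)\le a(v_0)$ strict, since $a(v_0)$ is a fixed anonymity and the witnesses below $v_0$ could still saturate it regardless of $v_g$.
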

\begin{proof}
Lines~6--7 create guessers in every level from $L_0$ to $L_{t-1}$ (these are the nodes representing the leader); hence, these levels must have a non-empty set of guessers at all times. Consider any level $L_i$ with $1\leq i\leq t$ such that all the guessable nodes in $L_i$ are already counted. Let $S$ be the set of guessers in $L_{i-1}$; note that not all nodes in $L_{i-1}$ are guessers, or else they would constitute a counting cut. The network is 1-interval-connected, so there is a red edge $\{u,v\}$ (with positive multiplicity) such that $u\in S$ and the parent of $v$ is not in $S$. By definition, the node $v$ is guessable; therefore, it is counted. Also, since the parent of $v$ is not a guesser, $v$ must have a non-counted parent or a non-counted sibling; note that such a non-counted node is in $\mathcal V$.

We have proved that every level (up to $L_{t}$) lacking a guessable non-counted node contains a counted node $v$ having a parent or a sibling that is not counted: we call such a node $v$ a \emph{bad} node. To conclude the proof, it suffices to show that there are at most $n-2$ bad nodes up to $L_{t}$.

We will prove by induction that, if a subtree $\mathcal W$ of $\mathcal V$ contains the root $r$, no counting cuts, and no non-trivial isles, then $\mathcal W$ contains at most $f-1$ bad nodes, where $f$ is the number of leaves of $\mathcal W$ not representing the leader. The base case is $f=1$, which holds because any counted non-leader node in $\mathcal W$ gives rise to a counting cut. For the induction step, let $v$ be a bad node of maximum depth in $\mathcal W$. Let $(v_1, v_2,\dots, v_k)$ be the black path from $v_1=v$ to the root $v_k=r$, and let $1<i\leq k$ be the smallest index such that $v_i$ has more than one child in $\mathcal W$ ($i$ must exist, because $v_k$ branches into leader and non-leader nodes). Let $\mathcal W'$ be the tree obtained by deleting the black edge $\{v_{i-1},v_i\}$ from $\mathcal W$, as well as the subtree hanging from it. Notice that the induction hypothesis applies to $\mathcal W'$: since $v_1$ is counted, and none of the nodes $v_2$, \dots, $v_{i-1}$ are branching, the removal of $\{v_{i-1},v_i\}$ does not create counting cuts or non-trivial isles. Also, $v_2$ is not counted (unless perhaps $v_2=v_i$), because $v_1$ is bad. Furthermore, none of the nodes $v_3$, \dots, $v_{i-1}$ is counted, or else $v_2$ would be an internal node of a (non-trivial) isle. Therefore, $\mathcal W'$ has exactly one less bad node than $\mathcal W$ and one less leaf; the induction hypothesis now implies that $\mathcal W$ contains at most $f-1$ bad nodes.

Observe that the subtree $\mathcal V'$ of $\mathcal V$ formed by all levels up to $L_{t}$ satisfies all of the above conditions, as it contains the root $r$ and has no counting cuts, because a counting cut for $\mathcal V'$ would be a counting cut for $\mathcal V$, as well. Also, invariant~(iii) ensures that $\mathcal V'$ contains no non-trivial complete isles. However, since the levels up to $L_{t}$ are contained in $\mathcal V'$, all isles in $\mathcal V'$ are complete, and thus must be trivial. We conclude that, if $\mathcal V'$ has $f$ non-leader leaves, it contains at most $f-1$ bad nodes. Since the non-leader leaves of $\mathcal V$ induce a partition of the $n-1$ non-leader processes, we have $f\leq n-1$, implying that the number of bad nodes up to $L_{t}$ is at most $n-2$.
\end{proof}

\begin{theorem}\label{xth:term}
There is an algorithm that solves the Generalized Counting problem in 1-interval-connected anonymous dynamic networks with a leader and terminates in at most $3n-2$ rounds.
\end{theorem}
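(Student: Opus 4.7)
The plan is to combine Corollary~\ref{xl:propa} with Lemmas~\ref{xl:bound1} and~\ref{xl:bound2} in a pigeonhole argument, splitting the analysis into a ``cut-creation'' step and a ``cut-verification'' step, and to show that the extra unit of slack needed beyond the $2n-2$ stabilization bound can be recovered within a single additional round.

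First, I would show that a counting cut must be created by round $3n-3$. Fix a round $t'$ at which the while loop has just exited (Line~19 has been reached) and suppose for contradiction that no counting cut exists. Set $t:=t'-n+1$; by Corollary~\ref{xl:propa} every node of levels $L_0,\dots,L_t$ belongs to the view $\mathcal V$, so Lemma~\ref{xl:bound2} is available. Each of the $t$ levels $L_1,\dots,L_t$ must then be either locked (at most $n-1$ by Lemma~\ref{xl:bound1}) or lack a guessable non-counted node (at most $n-2$ by Lemma~\ref{xl:bound2}), for a cap of $2n-3$ stuck levels. Once $t\ge 2n-2$, i.e., $t'\ge 3n-3$, this cap is strictly smaller than $t$, giving the required contradiction. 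Hence a counting cut $C$ exists at the end of every execution of the while loop from round $3n-3$ onwards.

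Second, I would verify the termination condition $t'\ge t^*+n'$ at round $t':=3n-2$, where $v^*\in L_{t^*}$ is the deepest node of $C$ and $n':=\sum_{v\in C}a(v)\le n$. By construction (Lines~19--24 keep only the shallowest counted node along each path), $v^*$ has no counted strict ancestor in $\mathcal V$, and in particular $v^*$ is itself a \emph{bad} node in the sense of the proof of Lemma~\ref{xl:bound2}. I would apply that lemma's bad-node bookkeeping to the pruned subtree of $\mathcal V$ obtained by cutting strictly below $v^*$, charging $v^*$ against the non-leader leaf descending from it in order to squeeze an extra unit out of the $(n-1)+(n-2)$ budget and force $t^*\le 2n-2$. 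Combined with $n'\le n$, this yields $t^*+n'\le(2n-2)+n=3n-2=t'$, so Lines~25--31 fire and the algorithm returns.

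The hard part will be the sharpening in the second step: the direct pigeonhole from the first step only gives $t^*\le 2n-1$, which is off by exactly one round. Recovering the missing unit is a delicate re-run of the inductive construction of the pruned subtree $\mathcal W'$ in the proof of Lemma~\ref{xl:bound2}, this time forcing $v^*$ itself into the role of an additional bad node on the black path from the root to $v^*$. Once this charging is carried out, both the existence of the counting cut and the truth of the termination test are established at round $3n-2$, and the theorem follows.
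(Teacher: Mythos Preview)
Your first step is fine and matches the paper's pigeonhole. The gap is in the second step: there is no off-by-one to repair, and your proposed workaround via the bad-node bookkeeping does not go through.

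The point you are missing is that Line~9 of the algorithm always selects a guessable non-counted node of \emph{smallest depth}. Run the pigeonhole from your first step not merely at the moment Line~19 is reached, but at \emph{every} visit to Line~9 before a counting cut exists: since levels up to $L_{2n-2}$ are entirely contained in $\mathcal V$ at round $3n-2$ (indeed up to $L_{2n-1}$, by Corollary~\ref{xl:propa}), Lemmas~\ref{xl:bound1} and~\ref{xl:bound2} guarantee a guessable level somewhere in $L_1,\dots,L_{2n-2}$ at each such visit. The smallest-depth rule therefore forces every guess made before the first counting cut to land at depth at most $2n-2$. Since heavy nodes are ancestors of guessed nodes and isle internals lie between already-counted nodes, every non-leader node that becomes counted before the cut appears sits at level at most $2n-2$; the leader node in $L_0$ handles the remaining branch. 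Hence the first counting cut---and a fortiori the final $C$ extracted in Lines~19--23, which can only be shallower---has its deepest node at level $t\le 2n-2$. With $n'\le n$ this gives $t+n'\le (2n-2)+n=3n-2=t'$ directly, and nothing needs to be ``squeezed''.

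Your attempted fix also fails for a structural reason: the bad-node inequality inside the proof of Lemma~\ref{xl:bound2} is proved only for subtrees containing \emph{no counting cut} and no non-trivial isles. Once $C$ exists you cannot invoke that induction on your pruned tree, and ``charging $v^*$ against a non-leader leaf'' has no formal content in that argument. Drop the entire sharpening manoeuvre and replace it by the smallest-depth observation above; the proof then becomes the short paragraph the paper gives.
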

\begin{proof}
It suffices to prove that, if a process $p$ executes the above algorithm at round~$t'=3n-2$, it terminates. By \cref{xl:propa}, the levels of the history tree up to $L_{2n-1}$ completely appear in $p$'s history. Due to \cref{xl:bound1,xl:bound2}, as long as a counting cut has not been created, at least one level between $L_1$ and $L_{2n-2}$ is guessable (because at most $n-1$ levels are locked and at most $n-2$ levels lack a guessable non-counted node). Hence, when Line~19 is reached, a counting cut $C$ has been found whose deepest node is in $L_t$, with $t\leq 2n-2$. The level $L_t$ is completely contained in $p$'s history, and therefore $n'=\sum_{v\in C}a(v)=n$. Thus, $t'=3n-2\geq t+n'$, and $p$ terminates.
\end{proof}

Our analysis of the algorithm is tight: In \cref{as:3.2}, we will show that there are 1-interval-connected networks where the algorithm terminates in exactly $3n-3$ rounds.

\section{Negative Results}\label{xs:5}
\mypar{Unsolvable Problems.}
We will now prove that the multi-aggregation problems introduced in \cref{xs:2} are the only problems that can be solved deterministically in anonymous networks.

\begin{theorem}\label{xth:imp}
No problem other than the multi-aggregation problems can be solved deterministically, even when restricted to simple connected static networks with a unique leader.
\end{theorem}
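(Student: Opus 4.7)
The plan is to exploit a maximally symmetric static network to force $\mathcal A$ to produce outputs that depend only on a process's own input and the multiset of all inputs, so that the problem it solves admits exactly such a specification. Concretely, I would fix any admissible input multiset $\mu$ (containing exactly one leader input) and consider the simple, connected, static network whose underlying graph is the complete graph $K_n$ on $n=|\mu|$ nodes, with inputs distributed according to $\mu$. This network lies in the restricted class named in the statement and is in particular 1-interval-connected, so $\mathcal A$ is required to solve $\mathcal P$ on it.

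The technical core is a short induction on $t\geq 0$ showing that any two processes $p$ and $q$ assigned the same input at round~$0$ are in the same internal state at the end of round~$t$. The base case holds because the initial state is a deterministic function of the input. For the inductive step, on $K_n$ each of $p$ and $q$ receives at round~$t$ the multiset $M\setminus\{s\}$, where $M$ is the multiset of round-$(t-1)$ states of all processes and $s$ is the common state of $p$ and $q$ at round~$t-1$; these two incoming multisets agree, so the deterministic local update of $\mathcal A$ produces identical next states. Equivalently, by \cref{xth:view}, $p$ and $q$ share the same history at every round of this execution, and hence the same output at every round.

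From this I would conclude that, whenever $\mathcal A$ is executed on $K_n$ with input multiset $\mu$, the eventual (stable) output of any process with input $x$ is determined by the pair $(x,\mu)$ alone; call it $f(x,\mu)$. Since $\mathcal A$ solves $\mathcal P$, the multiset $\{f(x,\mu)\colon x\in\mu\}$ is a correct output for $\mathcal P$ on input~$\mu$, and the assignment ``each process with input $x$ outputs $f(x,\mu)$'' is a valid multi-aggregation specification of $\mathcal P$ on this input. Ranging over all admissible $\mu$ exhibits $\mathcal P$ globally as a multi-aggregation problem.

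The only genuinely delicate point, and the one I expect to need the most care, is the indistinguishability induction on $K_n$: it is what collapses the apparent generality of $\mathcal A$ onto a function of $(x,\mu)$. Everything else is essentially bookkeeping, and no further network construction is needed beyond the complete graph, since $K_n$ already lies in the restricted class (simple, connected, static, one leader) mentioned in the statement, which makes the impossibility as strong as possible.
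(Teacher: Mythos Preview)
Your proposal is correct and follows essentially the same approach as the paper: both arguments run $\mathcal A$ on the static complete graph $K_n$ and use the indistinguishability of same-input processes (your direct state induction is exactly the paper's observation that every node of the history tree is non-branching) to conclude via \cref{xth:view} that each process' stable output is a function of $(x,\mu)$ alone. The paper's write-up is terser, but the content is the same.
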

\begin{proof}
Let us consider the (static) network whose topology at round~$t$ is the complete graph $G_t=K_n$, i.e., each process receives messages from all other processes at every round. We can prove by induction that all nodes of the history tree other than the root have exactly one child. This is because any two processes with the same input always receive equal multisets of messages, and are therefore always indistinguishable. Thus, the history tree is completely determined by the multiset $\mu$ of all processes' inputs; moreover, a process' history at any given round only depends on the process' own input and on $\mu$. By \cref{xth:view}, this is enough to conclude that if a process' output stabilizes, that output must be a function of the process' own input and of $\mu$, which is the defining condition of a multi-aggregation problem.
\end{proof}

\mypar{Lower Bound.} We now prove a lower bound of roughly $2n$ rounds on the Counting problem. In \cref{as:3.3}, we will provide additional (albeit weaker) lower bounds for several other problems.

We first introduce a family of 1-interval-connected dynamic networks. For any $n\geq 1$, we consider the dynamic network $\mathcal G_n$ whose topology at round~$t$ is the graph $G^{(n)}_t$ defined on the system $\{p_1,p_2,\dots,p_n\}$ as follows. If $t\geq n-2$, then $G^{(n)}_t$ is the path graph $P_n$ spanning all processes $p_1$, $p_2$, \dots, $p_n$ in order. If $1\leq t\leq n-3$, then $G^{(n)}_t$ is $P_n$ with the addition of the single edge $\{p_{t+1}, p_n\}$. We assume $p_1$ to be the leader and all other processes to have the same input.

\begin{theorem}\label{xth:lower}
No deterministic algorithm can solve the Counting problem in less than $2n-6$ rounds, even  in a simple 1-interval-connected dynamic network, and even if there is a unique leader in the system. If termination is required, the bound improves to $2n-4$ rounds.\footnote{A slightly better bound can be obtained if we allow networks with double links. If we double the edge $\{p_{n-1}, p_n\}$ in $G^{(n)}_{n-2}$ and we add a double self-loop on $p_n$ in $G^{(n)}_t$ for all $t\geq n-1$, we obtain a lower bound of $2n-3$ rounds for stabilization and $2n-1$ rounds for termination.}
\end{theorem}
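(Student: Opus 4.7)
The plan is a classical indistinguishability argument, formulated in the history-tree framework. For each bound, I would exhibit two 1-interval-connected anonymous networks with a unique leader but different total numbers of processes, and show that the leader's view (history) at round~$T$ is isomorphic as a labeled graph in both networks. By \cref{xth:view}, the deterministic algorithm must then produce the same output in both, contradicting correctness in at least one. The natural candidates are $\mathcal G_n$ paired with $\mathcal G_{n-1}$ (to obtain the stabilization bound via indistinguishability at $T = 2n-7$) and $\mathcal G_n$ paired with $\mathcal G_{n+1}$ (to obtain the termination bound via indistinguishability at $T = 2n-5$).

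The key tool would be a direct translation between monotonic paths in the history tree and time-respecting paths in the underlying dynamic network: a class $v \in L_t$ appears in the leader's view at round~$T$ if and only if some process $p \in v$ admits a time-respecting path from $(p, t)$ to $(p_1, T)$ using rounds in $(t, T]$. I would analyze the fastest such routes in $\mathcal G_n$: the direct path through $P_n$ yields $(p_i, t) \leadsto (p_1, t + i - 1)$, while the shortcut via the extra edge $\{p_{t+2}, p_n\}$ at round~$t+1$ yields $(p_n, t) \leadsto (p_1, 2t+2)$ for $t \leq n-4$. Plugging in $T = 2n-7$ would show that the ``middle'' class at level $L_{n-3}$, together with the singletons of $p_{n-2}, p_{n-1}, p_n$ at their levels of first appearance, all lie outside the leader's view---precisely the nodes whose local structure would distinguish $\mathcal G_n$ from networks of different sizes.

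A level-by-level comparison would then confirm that at $T = 2n-7$ the leader's views in $\mathcal G_n$ and $\mathcal G_{n-1}$ contain exactly the same class-nodes (singletons $\{p_1\}, \dots, \{p_{t+1}\}$ together with the middle class at each $L_t$ for $t \leq n-4$, plus a truncated collection of singletons at deeper levels), with matching black parent-child edges and matching red-edge multiplicities. The distinguishing features---for instance, that $\mathcal G_n$'s dynamics at round~$n-3$ use the extra edge $\{p_{n-2}, p_n\}$ that is absent in $\mathcal G_{n-1}$---appear only as red edges whose endpoints lie outside the view. An analogous check at $T = 2n-5$ between $\mathcal G_n$ and $\mathcal G_{n+1}$ goes through because the red edges that would distinguish them connect the middle class at $L_{n-3}$ to its children at $L_{n-2}$, and those children are still outside the leader's view at this round. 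The main technical obstacle I anticipate is this red-edge bookkeeping: one must enumerate every red edge between pairs of visible nodes at every level and verify that the multiplicities coincide in both networks, which requires carefully tracking which processes belong to the middle class at each round and how many messages each sends to each visible neighbor class.
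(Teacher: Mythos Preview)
Your approach is essentially the paper's: construct $\mathcal G_n$ and $\mathcal G_{n+1}$, argue their leaders' histories coincide through round $2n-5$, and invoke \cref{xth:view} to force equal outputs. One simplification you miss is that the paper derives \emph{both} bounds from the single indistinguishability claim at round $2n-5$ between $\mathcal G_n$ and $\mathcal G_{n+1}$: for stabilization, if an algorithm stabilized in fewer than $2n-6$ rounds for every size, then the leader of $\mathcal G_n$ would already output $n$ by round $2n-7$ and the leader of $\mathcal G_{n+1}$ would output $n+1$ by round $2(n+1)-7=2n-5$, yet at round $2n-5$ their views agree---contradiction. Your separate pairing of $\mathcal G_n$ with $\mathcal G_{n-1}$ at round $2n-7$ is just this same claim reindexed, so it costs you an extra case without buying anything. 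Your time-respecting-path analysis and red-edge bookkeeping are a reasonable way to flesh out what the paper dispatches as ``straightforward to prove by induction'' with reference to the figures; the paper's inductive structure (at round $t\le n-3$ process $p_{t+1}$ gets disambiguated while $p_{t+2},\dots,p_n$ remain a single class) makes the level-by-level comparison mechanical once stated.
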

\begin{proof}
Let us consider the network $\mathcal G_n$ as defined above. It is straightforward to prove by induction that, at round~$t\leq n-3$, the process $p_{t+1}$ gets disambiguated, while all processes $p_{t+2}$, $p_{t+3}$, \dots, $p_n$ are still indistinguishable. So, the history tree of $\mathcal G_n$ has a very regular structure, which is illustrated in \cref{xfig:lower1,xfig:lower2}. By comparing the history trees of $\mathcal G_n$ and $\mathcal G_{n+1}$, we see that the leaders of the two systems have identical histories from round~$1$ up to round~$2n-5$. Thus, by \cref{xth:view}, both leaders must have equal internal states and give equal outputs up to round~$2n-5$.

It follows that, if the leader of $\mathcal G_n$ were to output the number $n$ and terminate in less than $2n-4$ rounds, then the leader of $\mathcal G_{n+1}$ would do the same, producing the incorrect output $n$ instead of $n+1$. Assume now that the leaders of $\mathcal G_n$ and $\mathcal G_{n+1}$ could stabilize on the correct output in less than $f(n)=2n-6$ and $f(n+1)=2(n+1)-6=2n-4$ rounds, respectively. Then, at round~$2n-5$, the two leaders would output $n$ and $n+1$ respectively, contradicting the fact that they should give the same output.
\end{proof}

\mypar{Acknowledgments.} The authors would like to thank Gregory Schwartzman for useful comments.

\section{Conclusion} 
We have presented an algorithm for the Generalized Counting problem that terminates in $3n-2$ rounds, which allows the linear-time computation of all deterministically computable functions in 1-interval-connected anonymous dynamic networks with a leader.

We also proved a lower bound of roughly $2n$ rounds for the Counting problem, both for terminating and for stabilizing algorithms, and we gave a stabilizing algorithm for Generalized Counting that matches the lower bound (up to a small additive constant). Determining whether the $2n$ lower bound is tight for terminating algorithms is left as an open problem.

We introduced the novel concept of \emph{history tree} as our main investigation technique. History trees are a powerful tool that completely and naturally captures the concept of symmetry in anonymous dynamic networks. We have demonstrated the effectiveness of our methods by optimally solving a wide class of fundamental problems, and we argue that our technique can be used in similar settings, such as the case with multiple leaders, as well as to obtain better bounds for networks with special topologies or randomly generated networks.

\clearpage
\begin{figure}
\centering
\includegraphics[scale=0.575]{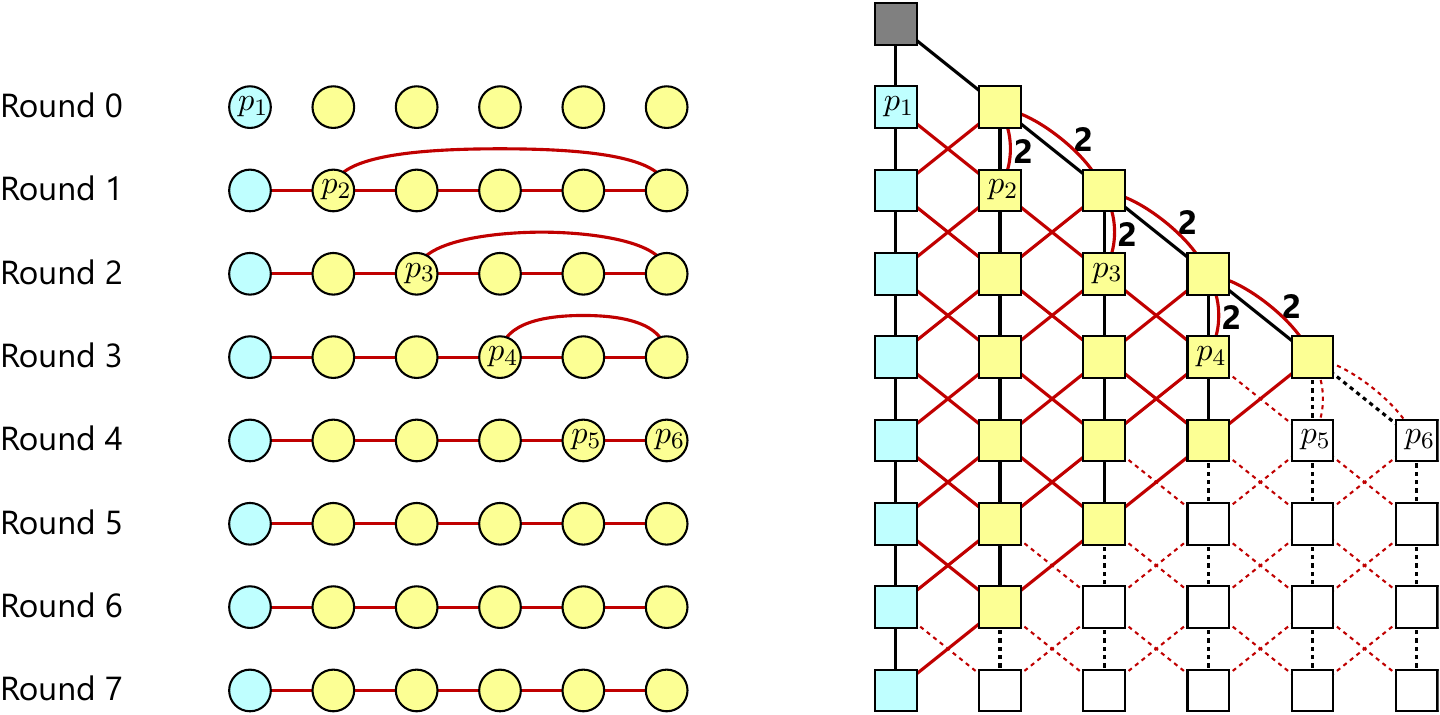}
\caption{The first rounds of the dynamic network $\mathcal G_n$ used in \cref{xth:lower} (left) and the corresponding levels of its history tree (right), where $n=6$; the process in blue is the leader. The white nodes and the dashed edges in the history tree are not in the history of the leader at round~$7$. The labels $p_1$, \dots, $p_6$ have been added for the reader's convenience, and mark the processes that get disambiguated, as well as their corresponding nodes of the history tree, which have anonymity~$1$.}
\label{xfig:lower1}
\end{figure}

\begin{figure}
\centering
\includegraphics[scale=0.575]{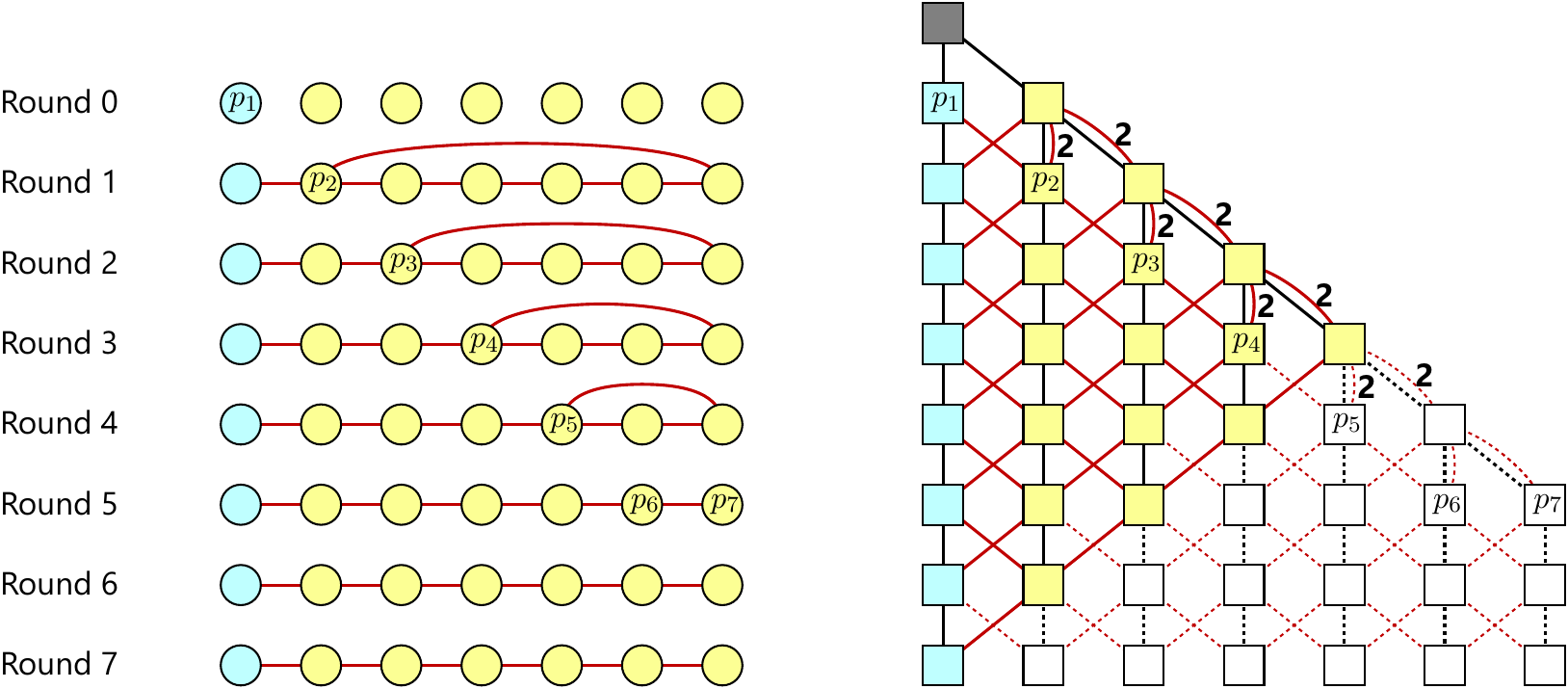}
\caption{The first rounds of the dynamic network $\mathcal G_{n+1}$ with $n=6$. Observe that the history of the leader at round~$7$ is identical to the history highlighted in \cref{xfig:lower1}. The intuitive reason is that, from round~$1$ to round~$n-3$, both networks have a cycle whose processes are all indistinguishable (and are therefore represented by a single node in the history tree), except for the one process with degree~$3$. Thus, the history trees of $\mathcal G_n$ and $\mathcal G_{n+1}$ are identical up to level~$n-3$. After that, the two networks get disambiguated, but this information takes another $n-3$ rounds to reach the leader. Therefore, if the leader of $\mathcal G_n$ and the leader of $\mathcal G_{n+1}$ execute the same algorithm, they must have the same internal state up to round~$2n-5$, due to \cref{xth:view}. In particular, they cannot give different outputs up to that round, which leads to our lower bounds on stabilization and termination for the Counting problem.}
\label{xfig:lower2}
\end{figure}

\clearpage

\begin{center}
\begin{LARGE}
\textbf{APPENDIX}
\end{LARGE}
\end{center}

\appendix

\section{Formal Model Definition and Basic Results}\label{as:1}
A \emph{multiset} on an \emph{underlying set} $X$ is a function $\mu\colon X\to \mathbb N$. The non-negative integer $\mu(x)$ is the \emph{multiplicity} of $x\in X$, and specifies how many \emph{copies} of each element of $X$ are in the multiset. The set of all multisets on the underlying set $X$ is denoted as $\mult{X}$.

\subsection{Model of Computation}\label{as:1.1}
A \emph{dynamic network} is an infinite sequence $\mathcal G=(G_i)_{i\geq 1}$, where $G_i=(V,E_i)$ is an undirected multigraph, i.e., $E_i$ is a multiset of unordered pairs of elements of $V$. In this context, the set $V=\{1,2,\dots, n\}$ is called \emph{system}, and its $n\geq 1$ elements are the \emph{processes}. The elements of the multiset $E_i$ are called \emph{links}; note that we allow any number of ``parallel links'' between two processes.\footnote{In the dynamic networks literature, $G_i$ is typically assumed to be a simple graph, as at most one link between the same two processes is allowed. However, our results hold more generally for multigraphs.}

The standard model of computation for systems of processes specifies the following \emph{computation parameters}:
\begin{itemize}
\item three sets $\mathcal I$, $\mathcal O$, $\mathcal S$, representing the possible \emph{inputs}, \emph{outputs}, and \emph{internal states} for a process, respectively;
\item A partition of $\mathcal S$ into two subsets $\mathcal S_T$ and $\mathcal S_N$, representing the \emph{terminal} and \emph{non-terminal} states, respectively;
\item an \emph{input map} $\iota\colon \mathcal I\to\mathcal S$, where $\iota(x)$ represents the initial state of a process whose input is $x$;
\item an \emph{output map} $\omega\colon \mathcal S_T\to\mathcal O$, where $\omega(s)$ represents the output of a process in a terminal state $s$;
\item a function $\mathcal A\colon \mathcal S_N\times \mult{\mathcal S}\to \mathcal S$, representing a \emph{deterministic algorithm} for local computations. The algorithm takes as input the (non-terminal) state of a process $p$, as well as the multiset of states of all processes linked to $p$, and it outputs the new state for $p$.
\end{itemize}
Given the above parameters and a dynamic network $\mathcal G=(G_i)_{i\geq 1}$, the computation proceeds as follows. The system $V$ is assigned an \emph{input} in the form of a function $\lambda\colon V\to \mathcal I$. Time is discretized into units called \emph{rounds} $r_1$, $r_2$, $r_3$, etc.; the multigraph $G_i$ describes the links that are present at round $r_i$, and is called the \emph{topology} of the network at round $r_i$.

Each process in the system updates its own state at the end of every round. We denote by $\sigma_i(p)\in \mathcal S$ the state of process $p\in V$ after round $r_i$, for $i\geq 1$. Moreover, we define the \emph{initial state} of $p$ as $\sigma_0(p)=\iota(\lambda(p))$ (we could say that each process is assigned its initial state at ``round $r_0$'').

During round $r_i$, each process $p\in V$ \emph{broadcasts} its state to all its neighbors in $G_i$; then, $p$ receives the multiset $M_i(p)$ of states of all its neighbors (one state per incident link in $E_i$).\footnote{In a slightly different model, a process does not necessarily broadcast its entire state, but a \emph{message}, which in turn is computed as a function of the internal state.} Finally, if $p$'s state is non-terminal, it computes its next state based on its current state and $M_i(p)$. In formulas, if $\sigma_{i-1}(p)\in \mathcal S_T$, then $\sigma_i(p)=\sigma_{i-1}(p)$. Otherwise, $\sigma_i(p)=\mathcal A(\sigma_{i-1}(p),M_i(p))$, where $M_i(p)$ is defined as
\[M_i(p)\colon \mathcal S\to \mathbb N\quad \mbox{such that}\quad \mathcal S \ni s \mapsto \sum_{\substack{q\in V \\ \sigma_{i-1}(q)=s}}E_i(\{p,q\}).\footnote{Recall that $G_i$ is a multigraph, and therefore $E_i$ is a multiset, i.e., a function that maps every possible edge to its multiplicity (if an edge is absent, its multiplicity is $0$).}\]
Note that processes are \emph{anonymous}: they are initially identical and indistinguishable except for their input, and they all update their internal states by executing the same deterministic algorithm $\mathcal A$.

Once all processes in $V$ are in a terminal state, the system's \emph{output} is defined as the function $\zeta\colon V\to \mathcal O$ such that $\zeta(p)=\omega(\sigma_t(p))$. If this happens for the first time after round $r_t$, we say that the execution \emph{terminates} in $t$ rounds.

Given an input set $\mathcal I$ and an output set $\mathcal O$, a \emph{problem} on $(\mathcal I, \mathcal O)$ is a sequence of functions $P=(\pi_n)_{n\geq 1}$, where $\pi_n$ maps every function $\lambda\colon \{1,2,\dots, n\} \to \mathcal I$ to a function $\zeta\colon \{1,2,\dots, n\} \to \mathcal O$. Essentially, a problem prescribes a relationship between inputs and outputs: whenever a system of $n$ processes is assigned a certain input $\lambda$, it must eventually terminate with the output $\zeta=\pi_n(\lambda)$. We denote the set of all problems on $(\mathcal I, \mathcal O)$ as $\mathcal P(\mathcal I, \mathcal O)$.

We say that a problem $P=(\pi_n)_{n\geq 1}\in \mathcal P(\mathcal I, \mathcal O)$ is \emph{solvable in $f(n)$ rounds} if there exists computation parameters (i.e., an algorithm $\mathcal A$, as well as a set of states $\mathcal S=\mathcal S_N\cup\mathcal S_T$, an input map $\iota$, and an output map $\omega$) such that, whenever a system $V=\{1,2,\dots, n\}$ of $n$ processes is given an input $\lambda\colon V\to \mathcal I$ and carries out its computation on a dynamic network $\mathcal G$ as described above, it terminates in at most $f(n)$ rounds and outputs $\pi_n(\lambda)$, regardless of the topology of $\mathcal G$. Note that the algorithm $\mathcal A$ must be the same for every $n$, i.e., it is \emph{uniform}; in other words, the system is unaware of its own size.

A weaker notion of solvability involves \emph{stabilization} instead of termination. Here, the processes are only required to output $\pi_n(\lambda)$ starting at round $f(n)$ and at all subsequent rounds, without necessarily reaching a terminal state.

Since only trivial problems can be solved if no restrictions are made on the topology of the dynamic network $\mathcal G$ (think of a dynamic network with no links at all), it is customary to require that the dynamic network be \emph{1-interval-connected}. That is, if $\mathcal G=(G_i)_{i\geq 1}$, we assume that $G_i$ is a connected multigraph for all $i\geq 1$.

Another assumption that is often made about the system is the presence of a \emph{unique leader} among the processes. That is, the input set $\mathcal I$ is of the form $\mathcal I=\{L,N\}\times \mathcal I'$, and an input assignment $\lambda\colon V\to \mathcal I$ is valid if and only if there is exactly one process $p\in V$, the \emph{leader}, such that $\lambda(p)=(L, x)$ for some $x\in \mathcal I'$ (thus, all non-leader processes have an input of the form $(N,x)$).

\subsection{Multi-Aggregation Problems}\label{as:1.2}
In this section we will define an important class of problems in $\mathcal P(\mathcal I, \mathcal O)$, the \emph{multi-aggregation problems}. As we will see, these are precisely the problems that can be solved in 1-interval-connected anonymous dynamic networks with a unique leader.

Given a system $V$ and an input assignment $\lambda\colon V\to \mathcal I$, we define the \emph{inventory} of $\lambda$ as the function $\mu_\lambda\colon \mathcal I\to \mathbb N$ which counts the processes that are assigned each given input. Formally, $\mu_\lambda$ is the multiset on $\mathcal I$ such that $\mu_\lambda(x) = |\lambda^{-1}(x)|$ for all $x\in\mathcal I$.

A problem $P\in \mathcal P(\mathcal I, \mathcal O)$ is said to be a \emph{multi-aggregation problem} if the output to be computed by each process only depends on the process' own input and the inventory of all processes' inputs. Formally, $P=(\pi_n)_{n\geq 1}$ is a multi-aggregation problem if there is a function $\psi\colon \mathcal I \times \mult{\mathcal I} \to \mathcal O$, called \emph{signature function}, such that every input assignment $\lambda\colon \{1,2,\dots, n\} \to \mathcal I$ is mapped by $\pi_n$ to the output $\zeta\colon \{1,2,\dots, n\}\to \mathcal O$ where $\zeta(p)=\psi(\lambda(p), \mu_\lambda)$.

\subsection{Counting Problem}\label{as:1.3}
If the signature function $\psi(x,\mu)$ of a multi-aggregation problem $P\in\mathcal P(\mathcal I, \mathcal O)$ does not depend on the argument $x$, then $P$ is simply said to be an \emph{aggregation problem}. Thus, all aggregation problems are \emph{consensus problems}, where all processes must terminate with the same output. Notable examples of aggregation problems in $\mathcal P(\mathbb R, \mathbb R)$ include computing statistical functions on the input values, such as sum, average, maximum, median, mode, variance, etc.

The \emph{Counting Problem} $C_\mathcal I$ is the aggregation problem in $\mathcal P(\mathcal I, \mult{\mathcal I})$ whose signature function is $\psi(x,\mu)=\mu$. That is, all processes must determine the inventory of their input assignment $\lambda$, i.e., count the number of processes that have each input. If all inputs are the same (or, in the presence of a leader, if $\mathcal I =\{L,N\}$), the Counting Problem reduces to determining the total number of processes in the system, $n$.

The Counting Problem is \emph{complete} for the class of multi-aggregation problems, in the sense that solving it efficiently implies solving any other multi-aggregation problem efficiently (actually, with no overhead at all). The following theorem makes this observation precise.

\begin{theorem}\label{th:compl}
For every input set $\mathcal I$ and output set $\mathcal O$, if the Counting Problem $C_\mathcal I$ is solvable in $f(n)$ rounds, then every multi-aggregation problem in $\mathcal P(\mathcal I, \mathcal O)$ is solvable in $f(n)$ rounds, as well.
\end{theorem}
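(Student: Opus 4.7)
The plan is to give a direct black-box reduction, formalizing the one-line argument sketched earlier for \cref{xth:compl}. Let $P=(\pi_n)_{n\geq 1}\in\mathcal P(\mathcal I,\mathcal O)$ be a multi-aggregation problem with signature function $\psi\colon\mathcal I\times\mult{\mathcal I}\to\mathcal O$, and let $(\mathcal A,\mathcal S=\mathcal S_T\cup\mathcal S_N,\iota,\omega)$ be computation parameters that solve $C_\mathcal I$ in $f(n)$ rounds, so that $\omega\colon\mathcal S_T\to\mult{\mathcal I}$ returns the inventory $\mu_\lambda$. From these I would build new parameters $(\mathcal A',\mathcal S',\iota',\omega')$ that solve $P$ in exactly $f(n)$ rounds.

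The construction is as follows. Take $\mathcal S'=\mathcal S\times\mathcal I$, with $\mathcal S'_T=\mathcal S_T\times\mathcal I$ and $\mathcal S'_N=\mathcal S_N\times\mathcal I$; intuitively, each process augments its internal state with a permanent copy of its own input. Set $\iota'(x)=(\iota(x),x)$. Define $\mathcal A'((s,x),M)$ by first projecting the multiset $M\in\mult{\mathcal S'}$ onto its first components to obtain $M'\in\mult{\mathcal S}$ and then returning $(\mathcal A(s,M'),x)$, so the $\mathcal I$-coordinate is never altered. Finally, let $\omega'(s,x)=\psi(x,\omega(s))$.

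Correctness then follows mechanically from the hypothesis on $\mathcal A$. Since the $\mathcal I$-coordinate of every process's state is constant throughout the execution, the evolution of the $\mathcal S$-coordinates under $\mathcal A'$ on any dynamic network $\mathcal G$ with input assignment $\lambda$ is bitwise identical to the evolution of the same system under $\mathcal A$. In particular, within $f(n)$ rounds every process $p$ reaches a terminal state $(s_p,\lambda(p))$ with $\omega(s_p)=\mu_\lambda$, and so the $\mathcal A'$-output at $p$ is $\omega'(s_p,\lambda(p))=\psi(\lambda(p),\mu_\lambda)$, which by definition of a multi-aggregation problem equals $\pi_n(\lambda)(p)$. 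The round complexity and the uniformity in $n$ are inherited verbatim from $\mathcal A$.

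There is really no difficult step here; the only thing to be careful about is the bookkeeping that guarantees that appending the unused coordinate $x$ does not break anonymity, the $f(n)$-round bound, or the fact that $\mathcal A'$ is a single uniform algorithm. None of these properties is affected by the augmentation, because the extra coordinate is fixed at input time and is only consulted by the output map. This completes the reduction.
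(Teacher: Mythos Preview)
Your proposal is correct and is essentially identical to the paper's own proof: the paper also augments the state space to $\mathcal S'=\mathcal S\times\mathcal I$ with $\iota'(x)=(\iota(x),x)$, runs $\mathcal A$ on the projected messages while keeping the $\mathcal I$-coordinate fixed, and sets $\omega'((s,x))=\psi(x,\omega(s))$. The bookkeeping you flag (anonymity, uniformity, round count) is exactly what the paper implicitly relies on, so there is nothing to add.
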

\begin{proof}
The proof is essentially the same whether the requirement is termination or stabilization. For brevity, we only discuss termination.

Let $\mathcal A$ be an algorithm that solves $C_\mathcal I$ in $f(n)$ rounds with the computation parameters $\mathcal S=\mathcal S_N\cup \mathcal S_T$, $\iota$, $\omega$ as defined in \cref{as:1.1}. We will show how to solve $P$ by modifying $\mathcal A$ and the above parameters.

The idea is that each process should execute $\mathcal A$ while remembering its own input $x$. Once it has computed the inventory $\mu_\lambda$ of the system's input assignment, it immediately uses $x$ and $\mu_\lambda$ to compute the signature function of $P$.

We define the set of internal states for $P$ as $\mathcal S'=\mathcal S\times\mathcal I$, with non-terminal states $\mathcal S'_N=\mathcal S_N\times\mathcal I$. The new input map is $\iota'(x)=(\iota(x),x)\in \mathcal S'$ for all $x\in\mathcal I$. The algorithm $\mathcal A'\colon \mathcal S'_N\times \mult{\mathcal S'}\to \mathcal S'$ is defined as $\mathcal A'((s,x),\mu)=(\mathcal A(s,\mu'),x)$, where $\mu'$ is defined as
\[\mu'\colon \mathcal S\to \mathbb N\quad \mbox{such that}\quad \mathcal S\ni s \mapsto \sum_{x\in \mathcal I}\mu((s,x)).\]
Finally, we define the output map as $\omega'((s,x))=\psi(x,\omega(s))$, where $\psi\colon \mathcal I \times \mult{\mathcal I} \to \mathcal O$ is the signature function of $P$.

Since $\mathcal A$ solves the Counting Problem, when a system $V$ of $n$ processes is assigned an input $\lambda$ and executes $\mathcal A$ on a network $\mathcal G$, within $f(n)$ rounds each process $p\in V$ reaches a terminal state $s\in \mathcal S_T$ such that $\omega(s)=\mu_\lambda$. Therefore, if the system executes $\mathcal A'$ on the same input and in the same network, the process $p$ reaches the terminal state $(s,\lambda(p))\in\mathcal S'$ in the same number of rounds. Thus, $p$ gives the output $\omega'((s,\lambda(p)))=\psi(\lambda(p),\omega(s)) = \psi(\lambda(p),\mu_\lambda)$, as required by the problem $P$.
\end{proof}

We remark that \cref{th:compl} makes no assumption on the dynamic network's topology or the presence of a leader.

\section{Formal Definition of History Trees and Basic Properties}\label{as:2}
\subsection{Abstract Structure of a History Tree}\label{as:2.1}

A \emph{history tree} on an input set $\mathcal I$ is a quintuplet $\mathcal H=(H, u, B, R, \ell)$ such that:\footnote{Note that, in \cref{xfig:hg}, the root node $u$ has been renamed $r$ to match the notation of \cref{xs:2}.}
\begin{itemize}
\item $H$ is a countably infinite set of \emph{nodes}, with $u\in H$;
\item $(H,B)$ is an infinite undirected tree rooted at $u$, where every node has at least one child;
\item $(H,R)$ is an infinite undirected multigraph, where all edges have finite multiplicities;
\item $\ell\colon H\setminus\{u\}\to \mathcal I$ is a function that assigns a \emph{label} $\ell(h)\in\mathcal I$ to each node $h\in H$ other than $u$.
\end{itemize}
The elements of $B$ are called \emph{black edges}, and $(H,B)$ is the \emph{black tree} of $\mathcal H$. Similarly, $(H,R)$ is the \emph{red multigraph} of $\mathcal H$, and its edges (with positive multiplicity) are called \emph{red edges}.

The \emph{depth} of $h\in H$ is the distance between the nodes $u$ and $h$ as measured in the black tree. For all $i\geq -1$, we define the $i$th \emph{level} $L_i\subseteq H$ as the set of nodes that have depth $i+1$. Thus, $L_{-1}=\{u\}$.

The nodes of a history tree inherit their ``parent-child-sibling'' relationships from the black tree: for example, $u$ is the \emph{parent} of all the nodes in $L_0$ because it is connected to all of them by black edges; thus, every two nodes in $L_0$ are \emph{siblings}, etc.

A \emph{descending path} in a history tree is a sequence of $k\geq 1$ nodes $(h_1, h_2, \dots, h_k)$ such that, for all $1\leq i<k$, the unordered pair $\{h_i,h_{i+1}\}$ is either a black or a red edge and, if $h_i\in L_j$, then $h_{i+1}\in L_{j'}$ with $j'>j$. Equivalently, we say that $(h_k, h_{k-1}, \dots, h_1)$ is an \emph{ascending path}. The node $h_1$ is the \emph{upper endpoint} of the path, and $h_k$ is the \emph{lower endpoint}. If $k=1$, then $h_1$ is both the upper and the lower endpoint.

Let $h$ be a node in a history tree $\mathcal H=(H, u, B, R, \ell)$, and let $H_h\subset H$ be the set of all upper endpoints of ascending paths whose lower endpoint is $h$. The \emph{view} of $h$ in $\mathcal H$, denoted as $\nu(h)$, is the (finite) subtree of $\mathcal H$ induced by $H_h$. Namely, $\nu(h) = (H_h, B_h, R_h, \ell_h)$, where $B_h\subset B$ is the set of black edges whose endpoints are both in $H_h$, $R_h$ is the restriction of $R$ to the unordered pairs of nodes in $H_h$, and $\ell_h$ is the restriction of $\ell$ to $H_h$. The node $h$ is said to be the \emph{viewpoint} of the view $\nu(h)$; note that the viewpoint is the unique deepest node in the view. We will denote by $\View{\mathcal I}$ the set of all views of all history trees on the input set $\mathcal I$.

\subsection{History Tree of a Dynamic Network}\label{as:2.2}

Given a dynamic network $\mathcal G=(G_i=(V,E_i))_{i\geq 1}$ and an input assignment $\lambda\colon V\to\mathcal I$, we will show how to construct a history tree $\mathcal H=(H, u, B, R, \ell)$ with labels in $\mathcal I$ that is naturally associated with $\mathcal G$ and $\lambda$.

At the same time, for all $i\geq -1$, we will construct a \emph{representation function} $\rho_i\colon V\to L_i\subset H$. Intuitively, if $h\in L_i$, the preimage $\rho_i^{-1}(h)$ is a set of processes that, based on their ``history'', are necessarily ``indistinguishable'' at the end of round $r_i$ (we will also give a meaning to this sentence when $i=-1$ and $i=0$). The \emph{anonymity} of a node $h\in L_i\subset H$ is the number of processes that $h$ represents, and is given by the function $\alpha\colon H\to \mathbb N^+$, where $\alpha(h)=|\rho_i^{-1}(h)|$.

In this paradigm, the label $\ell(h)$ of a node $h\neq u$ is the input $\lambda(p)$ that each process $p$ represented by $h$ has received at the beginning of the execution (all such processes must have received the same input, or else they would have different histories, and they would not be \emph{necessarily} indistinguishable).

The black tree keeps track of the progressive ``disambiguation'' of processes: if two processes are represented by the same node $h\in L_{i-1}$, they have had the same history up to round $r_{i-1}$. However, if they receive different multisets of messages at round $r_i$, they are no longer \emph{necessarily} indistinguishable, and will therefore be represented by two different nodes in $L_i$, each of which is a child of $h$ in $\mathcal H$.

Red edges represent ``observations'': if, at round $r_i$, each of the processes represented by node $h\in L_i$ has received messages from processes represented by node $h'\in L_{i-1}$ through a total of $m$ links in $G_i$, then $R$ contains the red edge $\{h,h'\}$ with multiplicity $m$.

We inductively construct the history tree $\mathcal H$ and the representation functions $\rho_i$ level by level. First we define $\rho_{-1}(p)=u$ for all $p\in V$, where $u$ is the root of $\mathcal H$. The intuitive meaning is that, before processes are assigned inputs (i.e., at ``round $r_{-1}$''), they are all indistinguishable, because the network is anonymous. Thus, the anonymity of the root node $u$ is $|V|=n$.

The level $L_0$ of $\mathcal H$ represents the system at the end of ``round $r_0$'', i.e., after every process $p\in V$ has been assigned its input $\lambda(p)\in\mathcal I$ and has acquired an initial state $\iota(\lambda(p))$. At this point, processes with the same input are necessarily indistinguishable. Thus, for every input $x\in\lambda(V)$, there is a node $h_x\in L_0$ with label $\ell(h_x)=x$. Accordingly, for every process $p\in\lambda^{-1}(x)$, we define $\rho_0(p)=h_x$.

In order to inductively construct the level $L_i$ of $\mathcal H$ for $i\geq 1$, we define the concept of \emph{observation multiset} $o_i(p)\in \mult{L_{i-1}}$ of a process $p\in V$ at round $r_i$. This corresponds to the multiset of ``necessarily indistinguishable'' messages received by $p$ at round $r_i$, and its underlying set is $L_{i-1}$, i.e., the collection of equivalence classes of processes that are necessarily indistinguishable after round $r_{i-1}$. The definition of $o_i(p)$ is similar to the definition of $M_i(p)$ in \cref{as:1.1}, as these are two closely related concepts:
\[o_i(p)\colon L_{i-1}\to \mathbb N\quad \mbox{such that}\quad L_{i-1}\ni h \mapsto \sum_{q\in \rho_{i-1}^{-1}(h)}E_i(\{p,q\}).\]

Now, the children of $h\in L_{i-1}$ in $L_i$ are constructed as follows. Define the equivalence relation $\sim_h$ on the set of processes $V_h=\rho_{i-1}^{-1}(h)$ such that $p\sim_h q$ if and only if $o_i(p)=o_i(q)$. Let $W_1$, $W_2$, \dots, $W_k$ be the equivalence classes of $\sim_h$, with $W_1\cup W_2\cup\dots\cup W_k = V_h$. The node $h$ has exactly $k$ children $h_1$, $h_2$, \dots $h_k$ in $L_i$, one for each equivalence class of $\sim_h$. Thus, for every $1\leq j\leq k$ and every process $p\in W_j$, we define $\rho_i(p)=h_j$. Also, since $W_j\subseteq V_h$ and a process' input never changes, we set $\ell(h_j)=\ell(h)$. The red edges connecting $h_j$ with nodes in $L_{i-1}$ match the observation multiset of the processes in $W_j$. That is, if the node $h'\in L_{i-1}$ has multiplicity $m$ in $o_i(p)$, where $p\in W_j$, then the red edge $\{h_j,h'\}$ has multiplicity $m$ in $R$.

\subsection{Basic Properties of History Trees}\label{as:2.3}
The following properties of history trees are easily derived from the definitions in \cref{as:2.2}.
\begin{observation}\label{obs:hist}
\phantom{}
\begin{itemize}
\item No two nodes in $L_0$ have the same label, and each node in $H\setminus(L_{-1}\cup L_0)$ has the same label as its parent.
\item Edges only connect nodes in adjacent levels; if $\{h,h'\}$ is a black or red edge with $h\in L_i$, then $h'\in L_{i-1}\cup L_{i+1}$.
\item If $h\in L_i$ with $i\geq -1$ and $h_1,h_2,\dots,h_k\in L_{i+1}$ are the children of $h$, then
\[\bigsqcup_{j=1}^k\rho_{i+1}^{-1}(h_j)=\rho_i^{-1}(h),\quad\mbox{and therefore}\quad\sum_{j=1}^k\alpha(h_j) =\alpha(h).\]
\end{itemize}
\end{observation}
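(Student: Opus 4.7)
The plan is to verify each of the three bullets of \cref{obs:hist} by direct inspection of the inductive construction of the history tree given in \cref{as:2.2}. Since the observation simply records structural facts that were built into the definition, the proofs are essentially bookkeeping; no deep argument is needed, and I do not anticipate a genuine obstacle, only the need to be careful about the base case $i=-1$, where $L_{-1}=\{u\}$ has no labels.

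First, for the labeling claim, I would argue as follows. At level $L_0$, the construction creates exactly one node $h_x$ for each input $x$ actually assigned to some process, with $\ell(h_x)=x$. Since the map $x\mapsto h_x$ is a bijection from $\lambda(V)\subseteq\mathcal I$ to $L_0$, distinct nodes of $L_0$ receive distinct labels. For $i\geq 1$, the construction explicitly sets $\ell(h_j)=\ell(h)$ for each child $h_j\in L_i$ of $h\in L_{i-1}$, which is exactly the statement that every non-root non-$L_0$ node inherits its parent's label.

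Second, for the adjacency-of-edges claim, I would observe that both edge sets are created only between consecutive levels during the construction. Specifically, black edges appear only as parent–child links between a node $h\in L_{i-1}$ (or $h=u\in L_{-1}$) and its children in $L_i$. Red edges are introduced only when defining the observation multiset: for $h_j\in L_i$ and $h'\in L_{i-1}$, the multiplicity of $\{h_j,h'\}$ in $R$ is set to match the multiplicity of $h'$ in $o_i(p)$ for any $p\in W_j$. No other edges are ever added, so every edge connects adjacent levels, and any edge incident to $h\in L_i$ necessarily has its other endpoint in $L_{i-1}\cup L_{i+1}$.

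Third, for the partition identity, I would unfold the construction of $L_{i+1}$ from $h\in L_i$. The equivalence relation $\sim_h$ on $V_h=\rho_i^{-1}(h)$ partitions $V_h$ into classes $W_1,\ldots,W_k$; one child $h_j$ is created per class, and $\rho_{i+1}$ is defined so that $\rho_{i+1}(p)=h_j$ exactly when $p\in W_j$. Therefore $\rho_{i+1}^{-1}(h_j)=W_j$, and the $W_j$'s form a disjoint union equal to $\rho_i^{-1}(h)$, giving the first equality. Taking cardinalities and using the definition $\alpha(h)=|\rho_i^{-1}(h)|$ yields $\sum_j\alpha(h_j)=\alpha(h)$. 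The only mild subtlety, which I would flag explicitly, is the base case $i=-1$: here $\rho_{-1}^{-1}(u)=V$, and the construction of $L_0$ partitions $V$ according to the value of $\lambda$, so the same argument goes through with $\sim_u$ replaced by the equivalence relation ``equal input.''
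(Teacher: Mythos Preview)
Your proposal is correct and is exactly the verification-by-definition that the paper intends: the paper states \cref{obs:hist} as an Observation with no proof, remarking only that the properties ``are easily derived from the definitions in \cref{as:2.2},'' and your bullet-by-bullet unpacking of the construction (including the flagged $i=-1$ base case) is precisely that derivation.
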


We will now give a concrete meaning to the idea that the processes represented by a node of a history tree are ``necessarily indistinguishable''. That is, if a node is in the $i$th level, then all the processes it represents must have the same state at the end of round $r_i$, regardless of the deterministic algorithm being executed (cf.~\cref{cor:same}).

We will first prove a fundamental result: the internal state $\sigma_i(p)$ of any process $p\in V$ at the end of any round $r_i$, with $i\geq 0$, can be inferred from the process' \emph{history} $\hist{i}{p}$, which is defined as the view of the node of $L_i$ representing $p$, i.e., $\hist{i}{p}=\nu(\rho_i(p))$. 

\begin{theorem}\label{th:view}
Given any set of computation parameters (as defined in \cref{as:1.1}), there exists a function $\mathcal F\colon \View{\mathcal I}\to \mathcal S$ such that, for every history tree $\mathcal H$ associated with a dynamic network $\mathcal G$ and an input assignment $\lambda$ (as defined in \cref{as:2.2}), and for every process $p\in V$ and every $i\geq 0$, $\mathcal F(\hist{i}{p})=\sigma_i(p)$.
\end{theorem}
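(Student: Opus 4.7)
The plan is to define $\mathcal F$ by strong induction on the depth of the input view (i.e., on the round index $i$), mirroring the round-by-round evolution of the distributed algorithm $\mathcal A$ on the view itself. Along the way I will also verify the auxiliary fact that any two processes $p,q$ with $\rho_i(p)=\rho_i(q)$ have identical histories, so that a state can be legitimately attached to each node and not merely to each process.

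First I would handle the base case $i=0$. A history $\hist{0}{p}$ has viewpoint $\rho_0(p)\in L_0$, and its only nontrivial datum is the label $\ell(\rho_0(p))=\lambda(p)$. By the model definition of \cref{as:1.1}, $\sigma_0(p)=\iota(\lambda(p))$, so setting $\mathcal F(\hist{0}{p}):=\iota(\ell(\rho_0(p)))$ gives the required equality. This definition depends only on the view, not on $p$, $\mathcal G$, or $\lambda$.

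For the inductive step, assume $\mathcal F$ has been defined for all views of depth at most $i$ and satisfies $\mathcal F(\hist{j}{q})=\sigma_j(q)$ for every process $q$ and every $j\leq i$ in every dynamic network. Given a view $\mathcal V=\hist{i+1}{p}$ with viewpoint $h=\rho_{i+1}(p)\in L_{i+1}$, I read off from $\mathcal V$: (i) the black-parent $h^\ast\in L_i$ of $h$, whose subview $\nu(h^\ast)$ is contained in $\mathcal V$ and equals $\hist{i}{p}$; and (ii) for each $h'\in L_i$, the multiplicity $m(h,h')$ of the red edge $\{h,h'\}$ in $\mathcal V$, which by construction in \cref{as:2.2} equals the multiplicity of $h'$ in the observation multiset $o_{i+1}(p)$. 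I then set $s^\ast:=\mathcal F(\nu(h^\ast))$, which by induction equals $\sigma_i(p)$, and I build a multiset $M\in\mult{\mathcal S}$ by
\[
M(s):=\sum_{\substack{h'\in L_i\\ \mathcal F(\nu(h'))=s}} m(h,h').
\]
Finally I define $\mathcal F(\mathcal V):=s^\ast$ if $s^\ast\in \mathcal S_T$, and $\mathcal F(\mathcal V):=\mathcal A(s^\ast,M)$ otherwise. The key step is then to verify $M=M_{i+1}(p)$. Expanding the definition of $M_{i+1}(p)$ and grouping processes $q$ by their image $h'=\rho_i(q)\in L_i$ gives
\[
M_{i+1}(p)(s)=\sum_{h'\in L_i}\ \sum_{\substack{q\in\rho_i^{-1}(h')\\ \sigma_i(q)=s}} E_{i+1}(\{p,q\}),
\]
and the inner sum equals $m(h,h')$ exactly when all $q\in\rho_i^{-1}(h')$ satisfy $\sigma_i(q)=s$, and is $0$ otherwise. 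Hence everything reduces to showing that all processes represented by the same node of $L_i$ are already in the same state at the end of round~$r_i$; this I would establish as a separate sublemma, also by induction, using the fact that processes merged into a common node of $L_i$ share the entire history up to level $i$ (same parent in $L_{i-1}$ by the inductive hypothesis, plus identical observation multisets at round $r_i$ by the construction of $\sim_h$ in \cref{as:2.2}).

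The main obstacle is precisely this coherence statement: that $\mathcal F$ is well-defined on views, i.e., that the state it produces is the common state of every process represented by the viewpoint, rather than depending on a particular choice of $p$. Once that is pinned down, the computation of $M_{i+1}(p)$ from red-edge multiplicities goes through, and applying $\mathcal A$ (or preserving the terminal state) finishes the induction. No other subtlety arises, because the view contains by definition every black/red edge and label needed to carry out the simulation level by level.
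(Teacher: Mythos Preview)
Your proposal is correct and follows essentially the same inductive scheme as the paper: define $\mathcal F$ on the viewpoint's label at level~$0$, and at level~$i+1$ recover the previous state from the black parent's subview and the message multiset from the red-edge multiplicities, then apply $\mathcal A$ (or keep a terminal state).

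One remark: the ``main obstacle'' you isolate---that all processes represented by a common node $h'\in L_i$ share the same state---is not actually a separate sublemma but an immediate consequence of your own inductive hypothesis together with the definition $\hist{i}{q}=\nu(\rho_i(q))$. Indeed, if $\rho_i(q_1)=\rho_i(q_2)=h'$, then $\hist{i}{q_1}=\hist{i}{q_2}=\nu(h')$ by definition, whence $\sigma_i(q_1)=\mathcal F(\nu(h'))=\sigma_i(q_2)$ by induction. The paper handles this implicitly in exactly this way, so you need not build a parallel induction for coherence.
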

\begin{proof}
Let $h=\rho_i(p)$, and let $\hist{i}{p}=\nu(h) = (H_h, B_h, R_h, \ell_h)$ be the view of $h$ in the history tree $\mathcal H=(H, u, B, R, \ell)$. We will define $\mathcal F$ (in terms of the computation parameters $\iota\colon\mathcal I\to \mathcal S$ and $\mathcal A\colon \mathcal S_N\times \mult{\mathcal S}\to \mathcal S$) in such a way that $\mathcal F(\hist{i}{p})=\mathcal F(\nu(h))=\sigma_i(p)$.

Note that $\mathcal F$ can identify $h$ as the deepest node of $\hist{i}{p}$. Also, it can compute $i$ as the depth of $h$ in the black tree $(H_h, B_h)$ minus $1$. Hence, $h$ and $i$ do not have to be explicitly provided as arguments to $\mathcal F$.

The construction of $\mathcal F$ is done by induction on $i$. If $i=0$, the view $\hist{0}{p}=\nu(h)$ contains the node $h\in L_0$, whose label $\ell_h(h)=\ell(h)$ is the input of $p$, i.e., $\lambda(p)$ (cf.~the construction of $L_0$ in \cref{as:2.2}). Therefore, we set $\mathcal F(\hist{i}{p})=\iota(\ell(h))$; indeed, $\mathcal F(\hist{i}{p}) = \iota(\lambda(p)) = \sigma_0(p)$ (cf.~the definition of $\sigma_0(p)$ in \cref{as:1.1}).

Now let $i\geq 1$, and assume that $\mathcal F(\hist{i-1}{q})$ has been defined for every $q\in V$. We will show how, given $\hist{i}{p}=\nu(h)$, the function $\mathcal F$ can compute $\sigma_i(p)$.

By definition of view, it immediately follows that the view of any node $h'\in H_h$ is contained in the view of $h$. That is, all nodes and all black and red edges of $\nu(h')$ are contained in $\nu(h)$, and the two views also agree on the labels. Thus, given the history $\hist{i}{p}=\nu(h)$, the function $\mathcal F$ can infer the view of any node $h'\in H_h$ by taking all the ascending paths in $\nu(h)$ with lower endpoint $h'$. In particular, if $\{h,h'\}$ is a black or red edge in $\nu(h)$, then $h'\in L_{i-1}$, and therefore $\mathcal F$ can determine the state $\sigma_{i-1}(q)$ of any process $q\in \rho_{i-1}^{-1}(h')$, by the inductive hypothesis.

Observe that the only black edge $\{h,h''\}\in B_h$ incident to $h$ connects it with its parent $h''\in L_{i-1}$, and $p\in \rho_{i-1}^{-1}(h'')$, due to \cref{obs:hist}. Thus, $\mathcal F$ can identify the parent of $h$ and determine $\sigma_{i-1}(p)$. Similarly, the observation multiset $o_i(p)$ can be inferred by taking the multiplicities of all red edges of the form $\{h, h'''\}\in R_h$ (cf.~the construction of the red edges in \cref{as:2.2}). Again, for each such $h'''\in L_{i-1}$, it is possible to determine the state $\sigma_{i-1}(q)$ of any process $q\in \rho_{i-1}^{-1}(h''')$. The multiset of these states (with the multiplicities inherited from $o_i(p)$) is precisely $M_i(p)$, i.e., the multiset of states that $p$ receives at round $r_i$ (cf.~the definition of $M_i(p)$ in \cref{as:1.1}).

We conclude that $\mathcal F$ can compute $\sigma_i(p)$ by first determining $\sigma_{i-1}(p)$ and $M_i(p)$, and then computing $\mathcal A(\sigma_{i-1}(p),M_i(p))$ (cf.~the definition of $\sigma_i(p)$ in \cref{as:1.1}).
\end{proof}

\begin{corollary}\label{cor:same}
During a computation in an anonymous dynamic network, at the end of round $r_i$, with $i\geq 0$, all processes represented by the same node (in the $i$th level) of the history tree have the same state.
\end{corollary}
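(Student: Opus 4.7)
The plan is to derive the corollary as an immediate consequence of \cref{th:view}, exploiting the fact that the history of a process depends only on which node represents it. Concretely, I would argue as follows.

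Fix any two processes $p, q \in V$ such that $\rho_i(p) = \rho_i(q) = h$ for some $h \in L_i$. By the definition of history given just before \cref{th:view}, we have $\hist{i}{p} = \nu(\rho_i(p)) = \nu(h)$ and similarly $\hist{i}{q} = \nu(h)$, so both histories are literally the same view of the history tree. Applying \cref{th:view} to each of $p$ and $q$ with the same function $\mathcal F$ determined by the computation parameters, I obtain
\[
\sigma_i(p) = \mathcal F(\hist{i}{p}) = \mathcal F(\nu(h)) = \mathcal F(\hist{i}{q}) = \sigma_i(q),
\]
which is exactly the statement of the corollary.

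There is essentially no obstacle here: the corollary is a direct unpacking of the definitions combined with the already-established \cref{th:view}. The only thing to be careful about is making sure the reader sees that two processes assigned to the same node necessarily yield identical views (as opposed to merely ``isomorphic'' ones), so that $\mathcal F$ receives exactly the same input in both cases; this is transparent from the definition $\hist{i}{p}:=\nu(\rho_i(p))$, which depends on $p$ only through $\rho_i(p)$.
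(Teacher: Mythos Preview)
Your proposal is correct and matches the paper's own proof essentially verbatim: both observe that any process represented by $h\in L_i$ has history $\nu(h)$, and then invoke \cref{th:view} to conclude that its state is $\mathcal F(\nu(h))$, hence identical for all such processes. Your version is just slightly more explicit in spelling out the chain of equalities.
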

\begin{proof}
The history at round $r_i$ of all processes represented by the node $h\in L_i$ is $\nu(h)$. Therefore, all such processes must have the same state $\mathcal F(\nu(h))$, by \cref{th:view}.
\end{proof}

\subsection{Constructing and Updating History Trees}\label{as:2.4}

We will now describe the algorithm $\mathcal A^\ast$ mentioned at the end of \cref{xs:3}. This algorithm takes as input a process $p$'s history at the end of the previous round $\xi_{i-1}(p)$, as well as the multiset of histories of neighboring processes $M_i(p)$, and constructs the new history $\xi_{i}(p)$ by \emph{merging} $\xi_{i-1}(p)$ with all the histories in $M_i(p)$.

Let us give a preliminary definition. A \emph{homomorphism} from a view $\nu(a)=(H_a, B_a, R_a, \ell_a)\in\View{\mathcal I}$ to a view $\nu(b)=(H_b, B_b, R_b, \ell_b)\in \View{\mathcal I}$ is a function $\varphi\colon H_a\to H_b$ that ``preserves structure''. That is, for all $\{h,h'\}\in B_a$, we have $\{\varphi(h),\varphi(h')\}\in B_b$; for all $h,h'\in H_a$, we have $R_a(\{h,h'\})=R_b(\{\varphi(h),\varphi(h')\})$; for all $h\in H_a$, we have $\ell_a(h)=\ell_b(\varphi(h))$.

The new history $\xi_{i}(p)$ can be constructed from $\xi_{i-1}(p)$ and $M_i(p)$ as follows. The first step is to identify the viewpoint of $\xi_{i-1}(p)$, which is the (unique) deepest node $h$ in the view; note that $h=\rho_i(p)$. The second step is to ``extend'' $\xi_{i-1}(p)$ by adding a new node $h'$, with the same label as $h$, and the new black edge $\{h,h'\}$. Let $Z_0\in \View{\mathcal I}$ be the resulting view; note that $h'$ is the (unique) child of $h$ in $Z_0$. Eventually, the node $h'$ will be the viewpoint of $\xi_{i}(p)$.

Let $\xi_{i-1}(q_1)$, $\xi_{i-1}(q_2)$, \dots, $\xi_{i-1}(q_k)$ be the histories with positive multiplicity in $M_i(p)$ (note that they must all be histories of processes at round~$r_{i-1}$: such are the messages received by $p$ at round~$r_i$). The next phase of the algorithm is to construct a sequence of views $Z_0, Z_1, Z_2, \dots, Z_k\in \View{\mathcal I}$ such that $Z_j$ is the smallest view in $\View{\mathcal I}$ that contains both $Z_{j-1}$ and $\xi_{i-1}(q_j)$, for all $1\leq j\leq k$.

In practice, the algorithm constructs $Z_j$ by starting from $Z_{j-1}$ and gradually adding the ``missing nodes'' from $\xi_{i-1}(q_j)$, at the same time constructing an (injective) homomorphism $\varphi_j$ from $\xi_{i-1}(q_j)$ to $Z_j$. First, the root of $\xi_{i-1}(q_j)$ is mapped by $\varphi_j$ to the root of $Z_{j-1}$. Then, the algorithm scans all the nodes of $\xi_{i-1}(q_j)$ level by level (i.e., doing a breadth-first traversal along the black edges). Let $v$ be a node of $\xi_{i-1}(q_j)$ encountered during the traversal, and let $v'$ be its parent in $\xi_{i-1}(q_j)$. The algorithm attempts to match $v$ with a child $w$ of the node $w'=\varphi_j(v')$ in $Z_{j-1}$. The label of $w$ should be the same as the label of $v$ and, for every red edge $\{v,v''\}$ in $\xi_{i-1}(q_j)$, connecting $v$ with a previous-level node $v''$, the edge $\{w,\varphi_j(v'')\}$ should also appear in $Z_{j-1}$ with the same multiplicity. If a node $w$ with these properties does not exist, the algorithm creates one, and then sets $\varphi_j(v)=w$. When the traversal is over, the resulting structure is $Z_j$, by definition.

Note that the final structure $Z_k$ coincides with $\xi_{i}(p)$, except for some missing red edges: these are the red edges incident to the viewpoint $h'$ which represent the messages received by $p$ at round~$r_i$. Thus, for every $1\leq j\leq k$, the viewpoint $h_j$ of $\xi_{i-1}(q_j)$ is found (as the unique deepest node in $\xi_{i-1}(q_j)$) and the red edge $\{h',\varphi_j(h_j)\}$ is added to $Z_k$, with the same multiplicity as $\xi_{i-1}(q_j)$ in $M_i(p)$. The resulting history is $\xi_{i}(p)$.

\section{Additional Lower Bounds and Counterexamples}\label{as:3}

\subsection{Naive Termination May Fail}\label{as:3.1}

The example in \cref{xfig:counter1} shows that the techniques in \cref{xs:4.1} are insufficient to formulate a correct termination condition. The first level in the leader's view where all nodes are non-branching is $L_0$, but the stabilizing algorithm in \cref{xs:4.1} computes the wrong anonymities, reporting that $n=5$ (note that $L_0$ is actually branching in the full history tree). By increasing $k$ indefinitely, we let the leader see only four nodes per level for an arbitrarily long time, while the real size of the network is kept hidden. This makes any naive termination strategy ineffective.

\begin{figure}
\centering
\includegraphics[scale=0.575]{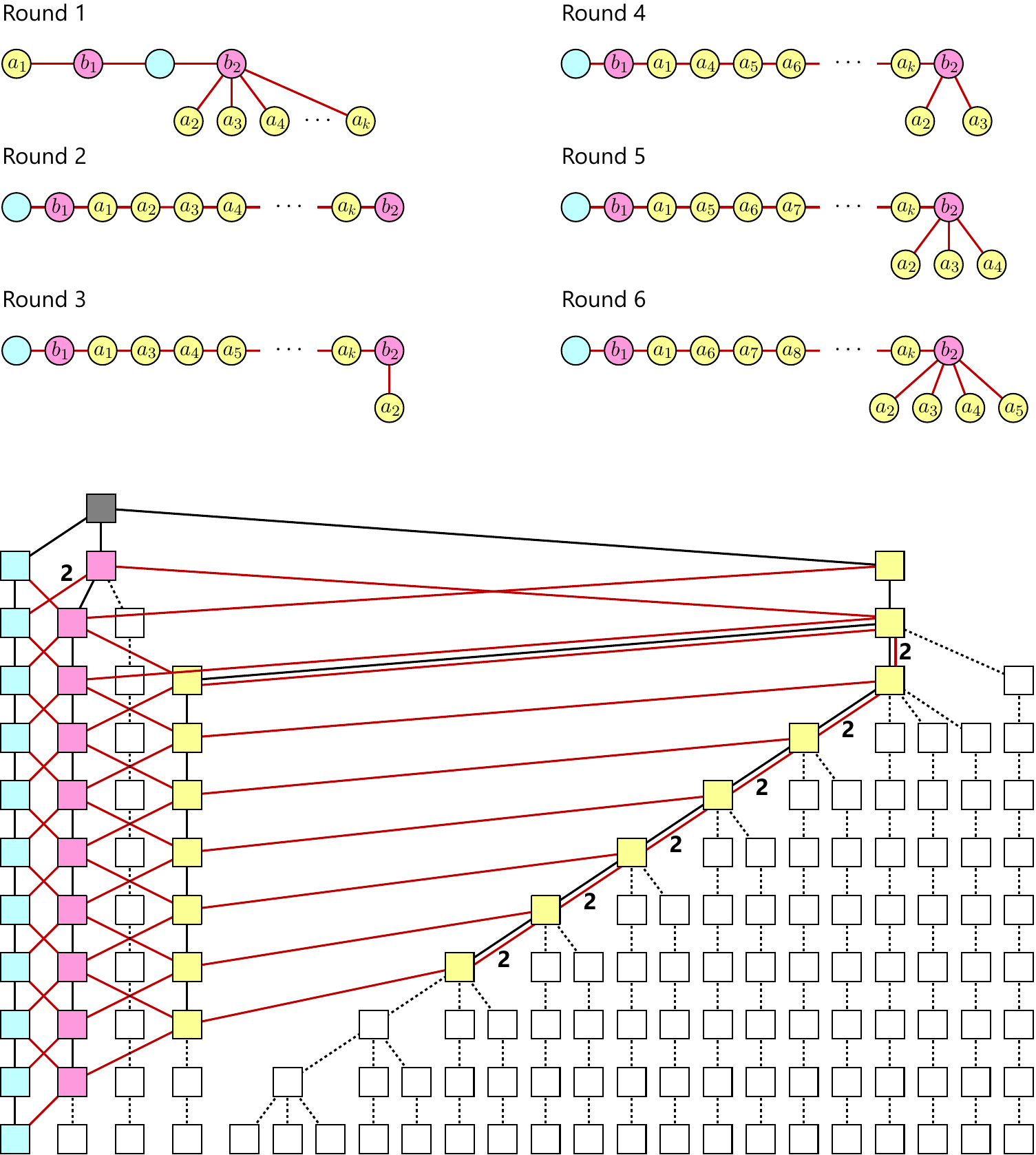}
\caption{An example of a dynamic network where the naive techniques of \cref{xs:4.1} fail to provide a termination condition. The white nodes in the history tree are not in the history of the leader at the last round; the red edges not in the view are not drawn. Same-colored processes have equal inputs. Note that, after level $L_1$, all levels in the leader's view are identical for an arbitrarily long sequence of rounds (depending on the parameter $k$).}
\label{xfig:counter1}
\end{figure}

\subsection{Worst-Case Example for the Terminating Algorithm}\label{as:3.2}

The example in \cref{xfig:counter2}, which can be easily generalized to networks of any size $n$, shows that the counting algorithm of \cref{xs:4.2} may terminate in $3n-3$ rounds. This almost matches our analysis in \cref{xth:term}, which gives an upper bound of $3n-2$ rounds.

\begin{figure}
\centering
\includegraphics[scale=0.575]{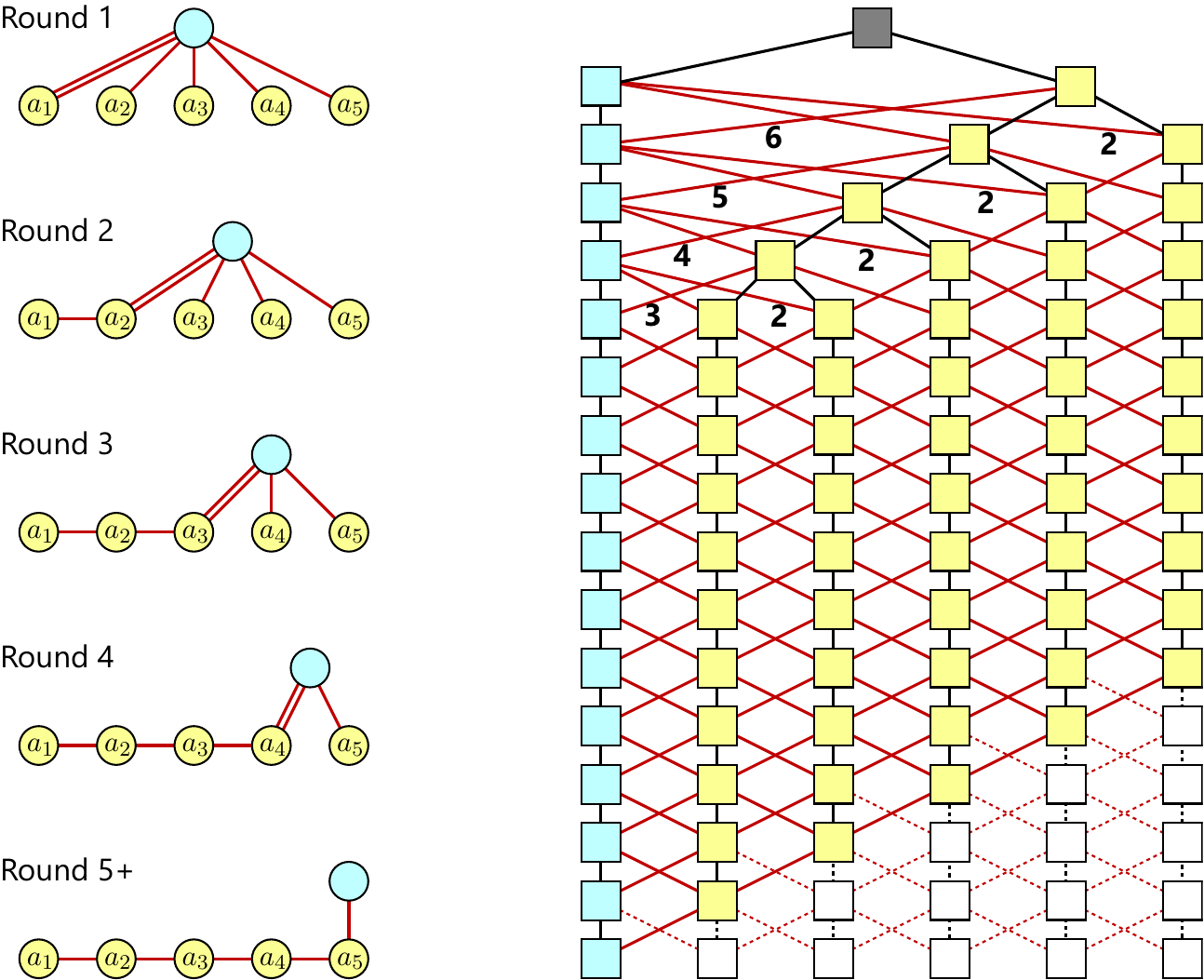}
\caption{An example of a dynamic network where the algorithm of \cref{xs:4.2} terminates in $3n-3$ rounds, and its history tree. The white nodes are not in the history of the leader at the last round.}
\label{xfig:counter2}
\end{figure}

\subsection{More General but Wearker Lower Bounds}\label{as:3.3}

We will show that a large class of multi-aggregation problems cannot be solved in an anonymous dynamic network in less than $1.5n-2$ rounds, even if the network's topology may change only once. Note that the bound of \cref{xth:lower} is better, but it only applies to the Counting Problem, and requires the network's topology to change $\Omega(n)$ times.

Our new lower bound is based on the construction of some 1-interval-connected dynamic networks whose topology changes from a cycle graph to a path graph (refer to \cref{xfig:lower3,xfig:lower4}). Given two parameters $n,m\in\mathbb N^+$ with $m<n$, we define the system $V_n=\{1,2,\dots, n\}$ and the dynamic network $\mathcal G_{n,m}=(G_i=(V_n,E_i))_{i\geq 1}$ as follows. For all $1\leq i< m$, the edge multiset $E_i$ contains all edges of the form $(j,j+1)$ with $1\leq j<n$, as well as $(1,n)$, all with multiplicity~$1$ (thus, $G_i$ is a cycle graph spanning all processes). For $i\geq m$, the topology changes slightly: the edges $(m,m+1)$ and $(1,n)$ are removed, and the edge $(m,n)$ is introduced, with multiplicity~$1$ (thus, $G_i$ is a path graph spanning all processes, with endpoints~$1$ and~$m+1$).

We will first apply our technique to the Counting Problem. Although this is redundant in light of \cref{xth:lower}, it allows us to showcase a proof pattern that applies to several more problems.
\begin{theorem}\label{th:lower}
No deterministic algorithm can solve the Counting Problem $\mathcal C_\mathcal I$ in an anonymous dynamic network of $n$ processes in less than $1.5n-2$ rounds, even if the network is 1-interval-connected, and even if there is a unique leader in the system.
\end{theorem}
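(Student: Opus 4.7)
My plan is to follow the same blueprint as the proof of \cref{xth:lower}: I will exhibit two networks of different sizes whose leaders have identical history trees for as many rounds as possible, then invoke \cref{xth:view} to force any counting algorithm to produce the same output in both, contradicting correctness.

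Concretely, I would fix $m = \lceil n/2 \rceil$ and compare $\mathcal G_{n,m}$ with $\mathcal G_{n+2,m}$, tracking the leader's view through three phases. In the cycle phase (rounds $1$ through $m-1$), both topologies are cycles of length $\geq 2m$, so the leader's growing view cannot wrap around; by induction on the round, each new level of the history tree introduces a single new ``distance class'' of size $2$ together with a single ``rest'' class whose anonymity differs between the two networks but does not appear in the view. All red-edge multiplicities are dictated purely by cycle geometry and are independent of cycle length, so the two views coincide at the end of round $m-1$.

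Next, at the transition round $m$, the cycle breaks into a path in both networks; the leader loses exactly one of its two distance-$1$ neighbors, and the resulting split at level $L_m$ of the history tree has identical structure in the two cases. For the path phase (rounds $m+1$ onward), I would argue by induction on the round that the leader's view continues to evolve identically in both networks until the first signal carrying endpoint information from the far endpoint $m+1$ reaches the leader. This endpoint sits at graph distance $n-1$ from the leader in $\mathcal G_{n,m}$ (and $n+1$ in $\mathcal G_{n+2,m}$), so the first distinguishing information arrives no earlier than round $m + (n-1)$. Combining the three phases, the two leaders' views agree up to approximately round $m + n - 2 \approx 1.5n - 2$, and \cref{xth:view} then forces any deterministic algorithm to produce identical, hence incorrect in at least one of the two networks, outputs throughout this prefix; this yields the claimed $1.5n - 2$ lower bound.

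I expect the path-phase induction to be the main obstacle. Verifying that the ``extra'' pair of processes present in $\mathcal G_{n+2,m}$ produces no distinguishable effect on any node or red-edge multiplicity appearing in the leader's view before the critical round requires a careful level-by-level case analysis of how classes split in the history tree after the topology change, and of how information about $m+1$'s new endpoint status propagates step by step along the path toward the leader. Once this invariant is established, the application of \cref{xth:view} closes the argument immediately.
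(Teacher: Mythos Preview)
Your proposal follows essentially the same approach as the paper's own proof: build two instances of the $\mathcal G_{n,m}$ family with different sizes, argue that the leaders' histories coincide throughout a cycle phase and then a path phase, and invoke \cref{xth:view} (equivalently \cref{th:view}) to derive a contradiction. Two minor differences are worth noting. First, the paper compares $\mathcal G_{2m,m}$ with $\mathcal G_{2m+1,m}$ (sizes differing by~$1$) rather than your $\mathcal G_{n,m}$ versus $\mathcal G_{n+2,m}$; either pairing works, but the paper's choice makes the correspondence map $\delta$ between the two process sets slightly cleaner to write down. Second, the paper avoids the level-by-level history-tree induction you anticipate as the ``main obstacle'': instead of tracking how classes split, it argues directly via \emph{$d$-neighborhood equivalence}, observing that two processes whose $(m-1)$-neighborhoods (with inputs) are isomorphic must have identical histories after $m-1$ rounds, and then that processes at equal distance from the leader along the two paths stay matched for a further $n-1$ rounds. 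This neighborhood argument is considerably shorter than the case analysis you outline and would let you bypass that obstacle entirely.
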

\begin{proof}
Let $\mathcal I=\{L,N\}\times \mathcal I'$ be the input set, and let $x\in \mathcal I'$. Fix $m\geq 1$, and let $n=2m$ and $n'=2m+1$. Consider the two systems $V=V_n$ and $V'=V_{n'}$ with their respective dynamic networks $\mathcal G=\mathcal G_{n,m}=(G_i=(V_n,E_i))_{i\geq 1}$ and $\mathcal G'=\mathcal G_{n',m}=(G'_i=(V_{n'},E'_i))_{i\geq 1}$ (where $\mathcal G_{n,m}$ and $\mathcal G_{n',m}$ are as defined above). We further define the input assignments $\lambda\colon V\to \mathcal I$ and $\lambda'\colon V'\to \mathcal I$, which assign the input $(L,x)$ to process~$1$ and the input $(N,x)$ to all other processes. Thus, in both systems, process~$1$ is the leader, and all other processes are anonymous. We denote as $\mathcal H$ (respectively, $\mathcal H'$) the history tree associated with $\lambda$ and $\mathcal G$ (respectively, $\lambda'$ and $\mathcal G'$), while $\xi_i(p)$ and $\xi'_i(p')$ denote the histories of processes $p\in V$ and $p'\in V'$, respectively, at round $r_i$.

We define the \emph{$d$-neighborhood} $\mathcal N_i(p,d)$ of a process $p\in V$ at round $r_i$, with $i\geq 1$, as the subgraph of $G_i$ induced by the processes that have distance at most $d$ from $p$ in $G_i$ (similarly, we define the $d$-neighborhood $\mathcal N'_i(p',d)$ of a process in $p'\in V'$). We say that $\mathcal N_i(p,d)$ is \emph{equivalent} to $\mathcal N'_i(p',d)$ if there is a graph isomorphism $\varphi$ from $\mathcal N_i(p,d)$ to $\mathcal N'_i(p',d)$ that preserves inputs; that is, $\lambda(q)=\lambda'(\varphi(q))$ for all processes $q$ in $\mathcal N_i(p,d)$.

Recall that, for the first $m-1$ rounds, the topology of both networks is a cycle graph. Thus, all processes at the same distance from their respective leader have equivalent $(m-1)$-neighborhoods throughout the first $m-1$ rounds, and therefore also have equal histories. That is, there is a function $\delta\colon V\to V'$ such that $\xi_i(p)=\xi'_i(\delta(p))$ for all $p\in V$ and $1\leq i<m$. Namely, $\delta(p)=p$ for all $1\leq p\leq m$, and $\delta(n-p)=n'-p$ for all $0\leq p<m$.

Starting at round~$r_m$, the topology of both networks changes to a path graph with the leader at one endpoint. These path graphs have the property that the process at distance $d$ from the leader in $V$ has the same history as the process at distance $d$ from the leader in $V'$ for all $0\leq d<n$. Thus, the two leaders will keep having equal histories for the following $n-1=2m-1$ rounds. We conclude that the leaders of the two systems have equal histories up to round $r_{3m-2}$.

Assume for a contradiction that there are computation parameters (as defined in \cref{as:1.1}) that cause all processes in both systems to output the inventory of their respective input assignments, $\mu_\lambda$ and $\mu_{\lambda'}$, thus solving $\mathcal C_\mathcal I$, in less than $1.5n-1=\lfloor 1.5n'\rfloor-2 = 3m-1$ rounds. Since the leaders of $V$ and $V'$ have equal histories throughout the first $3m-2$ rounds, by \cref{th:view} they must have equal states, as well. Thus, both leaders must give the same output upon termination, implying that $\mu_\lambda=\mu_{\lambda'}$. This is a contradiction, because $\mu_\lambda((N,x))=n-1\neq n=\mu_{\lambda'}((N,x))$.
\end{proof}

We can extend \cref{th:lower} to several other (multi-aggregation) problems. For example, fix $n,k\in \mathbb N^+$, let $m=\lfloor n/2\rfloor$, and consider the two dynamic networks $\mathcal G=\mathcal G_{n,m}$ and $\mathcal G'=\mathcal G_{2n+k-1,m}$ for the systems $V=V_n$ and $V'=V_{2n+k-1}$, respectively. Let us assign a leader input to process~$1$ in both systems, any arbitrary $k$-tuple $((N,x_1),(N,x_2),\dots, (N,x_k))$ of inputs to processes $n+1$, $n+2$, \dots, $n+k$ in $V'$, and another input $(N,x)$ (not in the $k$-tuple) to all other processes.

Now, the same argument used for proving \cref{th:lower} shows that the two leaders have equal histories for the first $3m-2$ rounds. Thus, the leaders are not only unable to count the number of processes in less than $1.5n-2$ rounds, but are also unable to tell whether the system contains any process at all with inputs in the chosen $k$-tuple.

\begin{figure}
\centering
\includegraphics[scale=0.575]{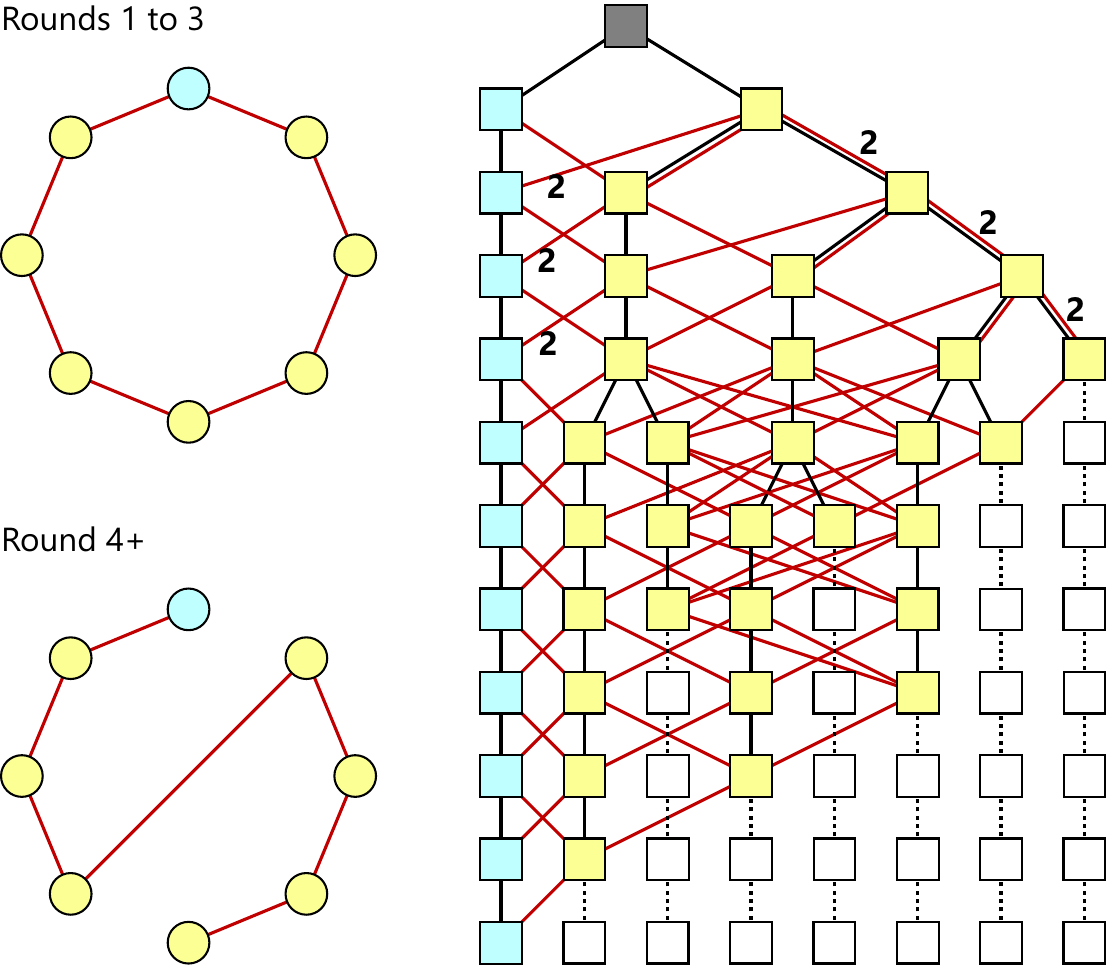}
\caption{The dynamic network $\mathcal G_{n,m}$ with $m=4$ and $n=8$, and its history tree. The red edges not in the view of the leader have not been drawn.}
\label{xfig:lower3}
\end{figure}

\begin{figure}
\centering
\includegraphics[scale=0.575]{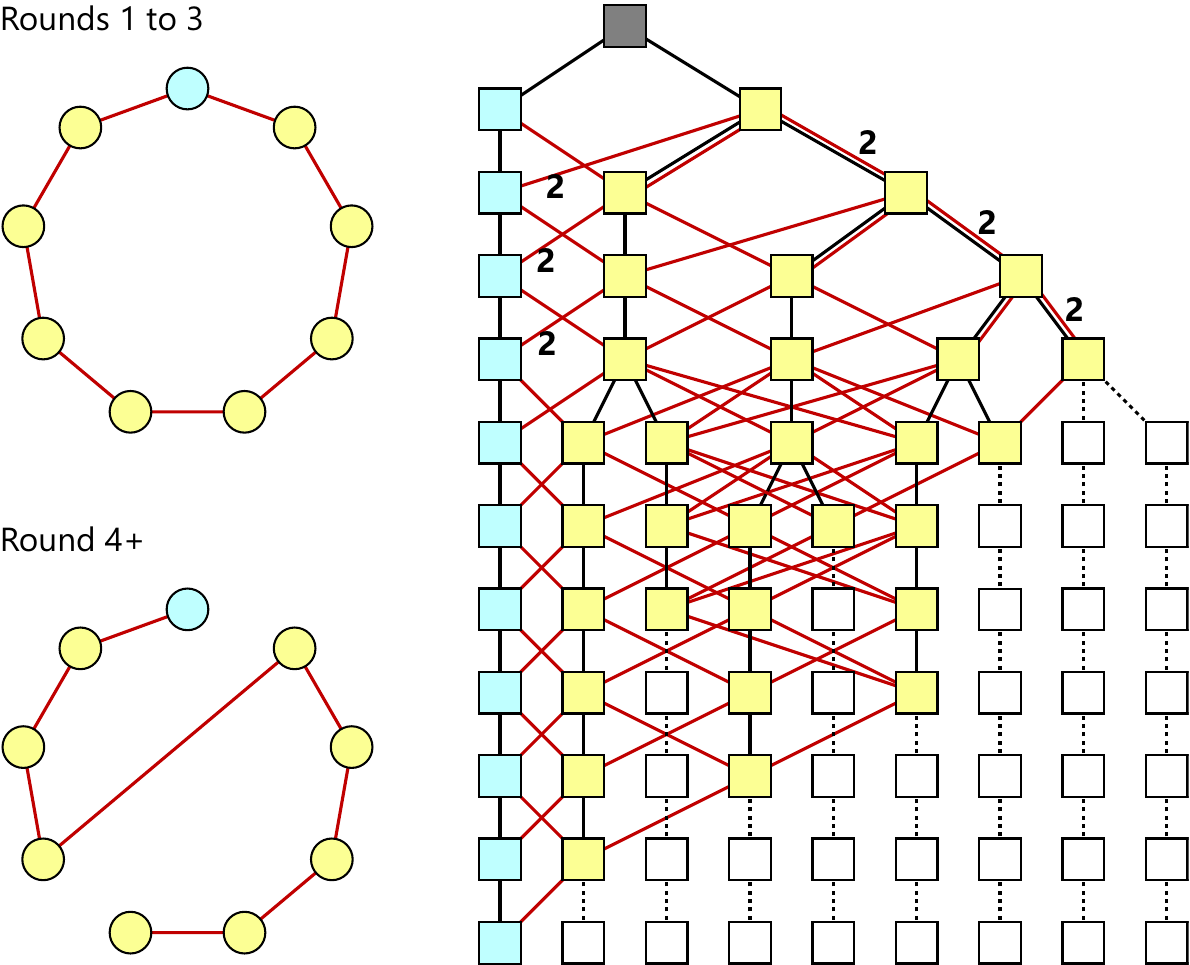}
\caption{The dynamic network $\mathcal G_{n,m}$ with $m=4$ and $n=9$, and its history tree. Observe that the leaders of this network and of the network in \cref{xfig:lower3} have isomorphic histories at round $3m-2=10$ (cf.~\cref{th:lower}).}
\label{xfig:lower4}
\end{figure}

\section{Survey of Related Work}\label{as:4}

We examine related work on counting and related problems by first discussing the case of dynamic networks with unique IDs, then the case of static anonymous networks, and finally the case of interval-connected anonymous networks. We conclude the section by examining the average consensus problem, which is deeply related to counting.

\subsection{Dynamic Networks with IDs}

The problem of counting the size of a dynamic network has been first studied by the peer-to-peer systems community  \cite{LKM06}. In this case having an exact count of the network at a given time is impossible, as processes may join or leave in an unrestricted way. Therefore, their algorithms mainly focus on providing estimates on the network size with some guarantees. 
The most related is the work that introduced 1-interval-connected networks \cite{KLO10}. They show a counting algorithm that terminates in at most $n+1$ rounds when messages are unrestricted and in $O(n^2)$ rounds when the message size is $O(\log n)$ bits.  The techniques used heavily rely on the presence of unique IDs and cannot be extended to our settings. 

\subsection{Anonymous Static Networks}
The study of computability on anonymous networks has been pioneered by Angluin in \cite{A80} and it has been a fruitful research topics for the last 30 years \cite{A80,BV01,CDS06,CGM08,JMM12,FPP00,SUW15,YK88}. A key concept in anonymous networks is the symmetry of the system; informally, it is the indistinguishability of nodes that have the same view of the network. As an example, in an anonymous static ring topology, all processes will have the exact same view of the system, and such a view does not change between rings of different size. Therefore, non-trivial computations including counting are impossible on rings, and some symmetry-breaking assumption is needed (such as a leader \cite{FPP00}). The situation changes if we consider topologies that are asymmetric. As an example, on a wheel graph the central node has a view that is unique, and this allows for the election of a leader and the possibility, among other tasks, of counting the size of the network. 

Several tools have been developed to characterize what can be computed on a given network topology (examples are views \cite{YK88} or fibrations \cite{BV02}). Unfortunately, these techniques are usable only in the static case and are not defined for highly dynamic systems like the ones studied in our work. Regarding the counting problem in anonymous static networks with a leader, \cite{MCS13} gives a counting algorithm that terminates in at most $2n$ rounds.

\subsection{Counting in Anonymous Interval-Connected Networks}
The papers that studied counting in anonymous dynamic networks can be divided into two periods. A first series of works~\cite{CMM18,DBBC14a,DBBC14b,MCS13} gave solutions for the counting problem assuming some initial knowledge on the possible degree of a processes. As a matter of fact \cite{MCS13} conjectured that some kind of knowledge was necessary to have a terminating counting algorithm. A second series of works~\cite{DB15,KM18,KM20,KM21,KM22} has first shown that counting was possible without such knowledge, and then has proposed increasingly faster solutions. We remark that all these papers assume that a leader (or multiple leaders in \cite{KM21}) is present. This assumption is needed to break the symmetry. 

\paragraph{Counting with knowledge on the degrees.}
Counting in interval-connected anonymous networks was first studied in \cite{MCS13}, where it is observed that a leader is necessary to solve counting in static (and therefore also dynamic) anonymous networks (this result can be derived from previous works on static networks such as \cite{BV02,YK88}). The paper does not give a counting algorithm but it gives an algorithm that is able to compute an upper bound on the network size. Specifically, \cite{MCS13} proposes an algorithm that, using an upper bound $d$ on the maximum degree that each process will ever have in the network, calculates an upper bound $U$ on the size of the network; this upper bound may be exponential in the actual network size ($U \leq d^{n}$). 

Assuming the knowledge of an upper bound on the degree, \cite{DBBC14a} given a counting algorithm that computes $n$. Such an algorithm is really costly in terms of rounds; it has been shown in \cite{CMM18} to be doubly exponential in the network size.  The algorithm proposes a mass distribution approach akin to local averaging \cite{T84}.

An experimental evaluation of the algorithm in \cite{DBBC14a} can be found in \cite{DBCB13}. The result of  \cite{DBBC14a} has been improved in \cite{CMM18}, where, again assuming knowledge of an upper bound $d$ on the maximum degree of a node, an algorithm is given that terminates in $ O\left(n(2d)^{n + 1} \frac {\log n} { \log {d}}\right)$ rounds.
A later paper~\cite{DBBC14b} has shown that counting is possible when each process knows its degree before starting the round (for example, by means of an oracle). In this case, no prior global upper bound on the degree of processes is needed. \cite{DBBC14b} only show that the algorithm eventually terminates but does not bound the termination time. 

We remark that all the above works assume some knowledge on the dynamic network, as an upper bound on the possible degrees, or as a local oracle.  Moreover, all of these works give exponential-time algorithms. 

\paragraph{Counting without knowledge on the degrees.}
The first work proposing an algorithm that does not require any knowledge of the network was  \cite{DB16}. The paper proposed an exponential-time algorithm that terminates in $O\left(n^{n+4}\right)$ rounds. Moreover, it also gives an asymptotically  optimal algorithm for a particular category of networks (called persistent-distance). In this type of network, a node never changes its distance from the leader. 

This result was improved in \cite{KM18,KM20}, which presented a polynomial-time counting algorithm.  The paper proposes Methodical Counting, an algorithm that counts in $O (n^5\log^2(n))$ rounds. Similar to   \cite{DBBC14a,DBBC14b}, the paper uses a mass-distribution process that is coupled with a refined analysis of convergence time and clever techniques to detect termination. The paper also notes that, using the same algorithm, all algebraic and boolean functions that depend on the initial number of processes in a given state can be computed. 
The authors of  \cite{KM18,KM20} extended their result to work in networks where $l \geq 1$ leaders are present (with $l$ known in advance) in \cite{KM22}. They  create an algorithm that terminates in $O \left(\frac {n^{4+ \epsilon} } {l} \log^{3} (n)\right)$ rounds, for any $\epsilon >0$. In particular, when $ l = 1 $, this results improves the running time of \cite{KM18,KM20}.

Finally,  in \cite{KM21}, they show a counting algorithm parameterized by the isoperimetric number of the dynamic network. The technique used is similar to \cite{KM18,KM20}, and it uses the knowledge of the isoperimetric number to shorten the termination time. Specifically,  for adversarial graphs (i.e., with non-random topology) with $l$ leaders ($l$ is assumed to be known in advance), they give an algorithm terminating in $O \left(\frac {n ^ {3+ \epsilon}} {l {i_ {min} }^2} \log^{3}(n)\right)$ rounds, where $i_{min}$ is a known lower bound on the isoperimetric number of the network. This improves the work in \cite{KM22}, but only in graphs where $i_ {min} $ is $ \omega ({1}/ {\sqrt {n}})$. The authors also study various types of graphs with stochastic dynamism; we remark that in this case they always obtain superlinear results, as well. The best case is that of Erdős--Rényi, graphs where their algorithm terminates in $ O\left (\frac {n ^ {1+ \epsilon}} {l {p_ {min}} ^ 2} \log ^ {5} (n) \right) $ rounds; here $ p_{min}$ is the smallest among the probabilities of creating an arc on all rounds. Specifically, if $ p_{min} =O ({1}/{n}) $, their algorithm is at least cubic.
 
Summarizing, to the best of our knowledge, no linear-time algorithm is  known for anonymous dynamic networks, even when additional knowledge is provided (e.g., the isoperimetric number~\cite{KM21}), or when the topology of the network is random and not worst-case. Notice that a random topology is a really powerful assumption in anonymous networks, as it is likely to greatly help in breaking symmetry.

We stress that our algorithms run in linear time without requiring any prior knowledge, and works for worst-case dynamic topologies (and thus, also in the case of a random dynamicity).  

\paragraph{Lower bounds on counting.}

From \cite{KLO10}, a trivial lower bound of $n-1$ rounds can be derived, as counting obviously requires information from each node to be spread in the network. Interestingly, prior to our work, little was known apart from this result. The only other lower bound is in~\cite{DB15}, which shows a specific category of anonymous dynamic networks with constant temporal diameter (the time needed to spread information from a node to all others is at most 3 rounds), but where counting requires $\Omega(\log n)$ rounds. 
Therefore, our lower bound of roughly $2n$ rounds greatly improves on the previously known lower bounds for the general case (networks that have an arbitrary temporal diameter). 

\subsection{Average Consensus}
Counting is deeply related to the average consensus problem. In this problem, each process $v_i$ starts with an input value $ x_i (0) $, and processes must calculate the average of these initial values. Note that the Generalized Counting trivially leads to a solution for the average consensus. 

The first paper to propose a decentralized solution was \cite{T84}. It introduced an approach called ``local averaging", where each process updates its local value $ x_i (r) $ at every round $r$ as follows:
\begin{center}
$\displaystyle x_i (r) = \sum _ {\forall {v_j \in N (r, v_i) \cup \{v_i \}}} a_ {ij} (r) \cdot x_j ( r-1).$
\end{center}
The value $ a_{ij} (r) $ is taken from a weight matrix that can model a dynamic graph.  The $ \epsilon $-convergence of this algorithm is defined as the time it takes to be sure that the maximum discrepancy between the local value of a node and the mean is at most $\epsilon$ times the initial discrepancy. That is, if the mean is $m=\sum_ {i \in V} x_i (0)/ {| V |}$, the following should hold:

$$  \frac{ \text{max}_{i} \left\{|x_i (r) - m| \right\}}{\text{max}_{i} \left\{|x_i (0) - m| \right\}}\leq \epsilon.$$

The local averaging approach has been studied in depth, and several upper and lower bounds for $\epsilon$-convergence are known for both static and dynamic networks \cite{OT09,OT11}.
The procedure $ \epsilon $-converges in $ O (|V| ^ {3} \log (1/{\epsilon})) $ rounds if, at every round, the weight matrix $ A (r)$ such that $ (A (r))_{ij} = a_{ij} (r)$ is doubly stochastic (i.e., the sum of the values on rows and columns is $1$) \cite{NOOT09}. In a dynamic network, it is possible to have doubly stochastic weight matrices when an upper bound on the node degree is known \cite{C11}. 
We remark that local averaging algorithms do not need unique IDs, and thus work in anonymous dynamic networks; we also note that the process does not explicitly terminate, but just converges to the average of the inputs. 

To the best of our knowledge, there is no solution to average consensus based on local averaging or other techniques with a linear convergence time. Our algorithm shows that solving the average consensus in 1-interval-connected anonymous networks with a leader is possible in linear time with strict termination and perfect convergence ($\epsilon=0$). Moreover, our lower bound on counting is also a lower bound on average consensus (if the leader starts with a value of $1$ and all other processes with $0$'s, the average converges to the inverse of the size). Thus, our paper shows that a lower bound on convergence of any average consensus algorithm is roughly $2n$ in anonymous dynamic networks. 

\clearpage
\addcontentsline{toc}{section}{References}
\bibliographystyle{plainurl}
\bibliography{anonymity}
\end{document}